\newcommand\myeq{\mathrel{\overset{\makebox[0pt]{\mbox{\normalfont\tiny\sffamily def}}}{=}}}
\newtheorem{theorem}{Theorem}[section]
\newtheorem{lemma}[theorem]{Lemma}
\newtheorem{corollary}[theorem]{Corollary}
\newtheorem{definition}{{\sc Definition}\rm}[section]
\newcommand{\ED}{\textsc{ED}\xspace}
\newcommand{\E}{\mathbb{E}}
\newcommand{\cl}[1]{\ensuremath {\sf #1}}
\begin{document}

\title {Element Distinctness, Frequency Moments, and Sliding Windows\vspace*{-2ex}}
\author{Paul Beame\thanks{Research supported by NSF grants CCF-1217099 and CCF-0916400}\\
  {\small Computer Science and Engineering}\\
  {\small University of Washington}\\
  {\small Seattle, WA 98195-2350}\\
  {\small\tt beame@cs.washington.edu}
  \and
  Rapha\"el Clifford\thanks{Research supported by the EPSRC.  This work was done while the author was visiting the University of Washington.} \\
  {\small Department of Computer Science}\\
  {\small University of Bristol}\\
  {\small Bristol BS8 1UB, United Kingdom}\\
  {\small\tt clifford@cs.bris.ac.uk }\\
  \and
  Widad Machmouchi\thanks{Research supported by NSF grant CCF-1217099}\\
  {\small Computer Science and Engineering}\\
  {\small University of Washington}\\
  {\small Seattle, WA 98195-2350}\\
  {\small\tt widad@cs.washington.edu}
  \vspace*{-2ex}}

\date{}
%\date{\today
%  \vspace*{-3ex}}

\maketitle
\thispagestyle{empty}
\begin{abstract}
We derive new time-space tradeoff lower bounds and algorithms for exactly
computing statistics of input data, including frequency moments,
element distinctness, and order statistics, that are simple to calculate for
sorted data. 
In particular, we develop a randomized algorithm for the element distinctness
problem whose time $T$ and space $S$ satisfy $T\in \tilde O(n^{3/2}/S^{1/2})$,
smaller than previous lower bounds for comparison-based algorithms, 
showing that element distinctness is strictly easier than sorting for
randomized branching programs.
%, for which a $T\in \Theta(n^2/S)$ bound applies. 
This algorithm is based on a new time- and space-efficient algorithm for
finding all collisions of a function $f$ from a finite set to itself that are
reachable by iterating $f$ from a given set of starting points.

We further show that our element distinctness algorithm can be extended at only
a polylogarithmic factor cost to solve the
element distinctness problem over sliding windows~\cite{dgim:windows}, where
the task is to take an input of length $2n-1$ and produce an output for
each window of length $n$, giving $n$ outputs in total.

In contrast, we show a time-space tradeoff lower
bound of $T\in \Omega(n^2/S)$ for randomized multi-way branching programs, and
hence standard RAM and word-RAM models, to compute the number
of distinct elements, $F_0$, over sliding windows. 
The same lower bound holds for computing the low-order bit of $F_0$ and
computing any frequency moment $F_k$ for $k\ne 1$.
This shows that frequency moments $F_k\ne 1$ and even the decision
problem $F_0\bmod 2$ are strictly harder than element distinctness.
We provide even stronger separations on average for inputs from $[n]$.

We complement this lower bound with a $T\in \tilde O(n^2/S)$
comparison-based deterministic RAM algorithm for exactly computing $F_k$ over
sliding windows, nearly matching both our general lower bound for
the sliding-window version and the comparison-based lower bounds for a single
instance of the problem.   We further exhibit a quantum algorithm for
$F_0$ over sliding windows with $T\in \tilde O(n^{3/2}/S^{1/2})$.
Finally, we consider the computations of order statistics over sliding windows.
\end{abstract}

\newpage
\setcounter{page}{1}
\section{Introduction}

Problems related to computing elementary statistics of input data have wide
applicability and utility.  
Despite their usefulness, there are surprising gaps in our knowledge about the
best ways to solve these simple problems, particularly in the context of
limited storage space.
Many of these elementary statistics can be easily calculated if the input data
is already sorted but sorting the data in space $S$,
requires time $T\in \Omega(n^2/S)$~\cite{bc:sorting,bea:sorting}, 
a bound matched by the best comparison
algorithms~\cite{pr:comparison-sorting} for all $S\in O(n/\log n)$.  
It has not been clear whether exactly computing elementary statistical
properties, such as frequency moments (e.g. $F_0$, the number of distinct
elements in the input) or element distinctness ($ED$, whether or not $F_0$
equals the input size) are as difficult as sorting when
storage is limited.

The main approach to proving time-space tradeoff lower bounds for problems
in $\cl{P}$ has been to analyze their complexity on (multi-way) \emph{branching
programs}.
(As is usual, the input is
assumed to be stored in read-only memory and the output in write-only memory
and neither is counted towards the space used by any algorithm.
The multi-way branching program model simulates both Turing machines and 
standard RAM models that
are unit-cost with respect to time and log-cost with respect to space.)
%Therefore in discussing complexity, we measure space usage in bits
%rather than words.)
An important method for this analysis 
was introduced by Borodin and Cook for sorting~\cite{bc:sorting} 
and has since been extended and generalized to randomized computation of
a number of other important multi-output problems (e.g.,
\cite{yes84,abr87,abr:tradeoff,bea:sorting,mnt:hashing-journal,sw:multiplyRAM}).
Unfortunately, the techniques of~\cite{bc:sorting} yield only trivial
bounds for problems with single outputs such as $F_0$ or $ED$.

Element distinctness has been a particular focus of lower bound analysis.
The first time-space tradeoff lower bounds for the problem apply to
structured algorithms.
Borodin et al.~\cite{bfmuw87} gave a time-space tradeoff lower bound for
computing $ED$ on \emph{comparison}
branching programs of $T\in\Omega(n^{3/2}/S^{1/2})$ and, since
$S\geq \log_2{n}$, $T\in\Omega(n^{3/2}\sqrt{\log n}/S)$.
Yao~\cite{yao88} improved this to a near-optimal
$T\in \Omega(n^{2-\epsilon(n)}/S)$, where $\epsilon(n)= 5/(\ln n)^{1/2}$.
Since these lower bounds apply to the average case for randomly ordered inputs,
by Yao's lemma, they also apply to randomized comparison branching programs.
These bounds also trivially apply to all frequency moments 
since, for $k\ne 1$, $ED(x)=n$ iff $F_k(x)=n$.
This near-quadratic
lower bound seemed to suggest that the complexity of $ED$ and $F_k$
should closely track that of sorting.

\begin{sloppypar}
For multi-way branching programs, Ajtai~\cite{ajtai:nonlinear-journal}
showed that any linear time algorithm for $ED$ 
must consume linear space.   Moreover, when $S$ is $n^{o(1)}$, Beame et
al.~\cite{bssv:randomts-journal}
showed a $T \in \Omega(n\sqrt{\log (n/S)/\log\log(n/S)})$
lower bound for computing $ED$.
This is a long way from the comparison branching program lower bound and
there has not been much prospect for closing the gap
since the largest lower bound known for multi-way branching programs
computing \emph{any} single-output problem in $\cl{P}$ is only 
$T\in \Omega(n\log ((n\log n)/S))$.
\end{sloppypar}

We show that this gap between sorting and element distinctness cannot be closed.
More precisely, we give a randomized multi-way branching program algorithm that
for any space bound $S\in [c\log n, n]$ computes $ED$ in time
$T\in \tilde O(n^{3/2}/S^{1/2})$\footnote{As is usual, we use $\tilde O$ to
suppress polylogarithmic factors in $n$.},
significantly beating the lower bound that
applies to comparison-based algorithms.
Our algorithm for $ED$ is based on an extension of Floyd's cycle-finding
algorithm~\cite{knu2} (more precisely, its variant, Pollard's rho
algorithm~\cite{pollard75}).
Pollard's rho algorithm finds the unique collision reachable by iterating
a function $f:[n]\rightarrow [n]$ from a single starting location in time
proportional
to the size of the reachable set, using only a constant number of pointers.
Variants of this algorithm have been used in cryptographic
applications to find collisions in functions that supposedly behave like
random functions~\cite{brent80, ssy82,nivasch04}.

More precisely, our new $ED$ algorithm is based on a new deterministic
extension of Floyd's algorithm to find \emph{all} collisions of a function
$f:[n]\rightarrow [n]$ reachable
by iterating $f$ from any one of a set of $k$ starting locations,
using only $O(k)$ pointers and using time roughly proportional to the size of
the reachable set.
Motivated by cryptographic applications, \cite{ow99} previously considered this
problem for the special case of random functions and suggested a method using
`distinguished points',
though the only analysis they gave was heuristic and incomplete. 
Our algorithm, developed independently, uses a different method,
applies to arbitrary functions, and has a fully rigorous analysis.

Our algorithm for $ED$ does not obviously apply to the computation of
frequency moments, such as $F_0$,
and so it is interesting to ask whether or not frequency moment computation
is harder than that of $ED$ and may be closer in complexity to sorting. 
Given the general difficulty of obtaining strong lower bounds for
single-output functions, we consider the relative complexity of computing
many copies of each of the functions at once and apply techniques for
multi-output functions to make the comparison.   
Since we want to retain a similar input size to that of our original problems,
we need to evaluate them on overlapping inputs.  

Evaluating the same function on overlapping inputs occurs as a natural
problem in time series analysis when it is useful to know the value of a
function on many
different intervals or {\em windows} within a sequence of values or updates,
each representing the recent history of the data at a given instant.
%A particularly natural circumstance in which one would want to evaluate many
%instances of a function on overlapping inputs occurs in the context of time
%series analysis.
%For many functions computed over sequences of data elements or data
%updates, it is useful to know the value of the function on many
%different intervals or {\em windows} within the sequence, each representing
%the recent history of the data at a given instant.
In the case that an answer
for every new element of the sequence is required, such computations have
been termed {\em sliding-window} computations for the associated
functions~\cite{dgim:windows}.
In particular, we consider inputs of length $2n-1$ where
the sliding-window task is to compute the function
for each window of length $n$, giving $n$ outputs in total.
We write $F^{\boxplus n}$ to denote this sliding-window version of a function
$F$.   

Many natural functions have been studied for sliding windows
including entropy, finding frequent symbols, frequency moments and order
statistics, which can be computed approximately in small
space using randomization even in one-pass data stream
algorithms~\cite{dgim:windows,BabcockBDMW02,am04,lt06, lt06b, ccm07,boz12}.
Approximation is required since exactly computing these values in this online
model can easily be shown to require large space.
The interested reader may find a more comprehensive list of sliding-windows
results by following the references in~\cite{boz12}.

%There are other reasons to be interested in sliding window problems.
%Direct sum questions in a computational model ask how the complexity of
%computing many instances of a function $f$ on independent inputs increases
%as the number of instances grows.  The ideal direct sum theorem shows that
%computing $n$ independent instances of $f$ requires an $\Omega(n)$ factor
%increase in computing resources over computing a single instance of $f$.
%
%Valuable though direct sum theorems can be, they require an increase
%in the number of inputs equal to the number of instances.
% and hence do not
% increase the complexity of the
% resulting function when measured relative to the input size\footnote{In
% contrast, so-called direct {\em product} theorems show that the success
% probability declines exponentially in $n$.}.
%It is interesting to understand how the complexity of computing many copies
%of a function $f$ can grow when the inputs overlap so that the total size of
%the input is not much larger than the input size for a single
%function\footnote{Computing many copies of a function on overlapping inputs
%(selected via a combinatorial design) was used as the basis for the
%Nisan-Wigderson pseudorandom generator construction~\cite{nw:pseudorandom},
%though in that case the total
%input size is much larger than that of the original function.}.

We show that computing $ED$ over $n$ sliding windows only incurs a
polylogarithmic overhead in time and space versus computing a single copy of
$ED$.
In particular, we can extend our randomized multi-way branching program
algorithm for $ED$ to yield an algorithm for $ED^{\boxplus n}$ that for 
space $S\in [c\log n,n]$ runs in time $T\in \tilde O(n^{3/2}/S^{1/2})$.

In contrast, we prove strong time-space lower bounds for computing the
sliding-window version
of any frequency moment $F_k$ for $k\ne 1$.  In particular, the time $T$ and
space $S$ to compute $F_k^{\boxplus n}$ must satisfy
$T\in \Omega(n^2/S)$ and $S\ge \log n$.
($F_1$ is simply the size of the input, so computing its value is always
trivial.)
The bounds are proved directly for randomized multi-way branching programs
which imply
lower bounds for the standard RAM and word-RAM models, as well as for the data
stream models discussed above.
Moreover, we show that the same lower bound holds for computing just
the parity of the number of distinct elements, $F_0\bmod 2$, in each window.
This formally proves a separation between the complexity of sliding-window
$F_0\bmod 2$ and sliding-window $ED$.
These results suggest that in the continuing search for strong complexity
lower bounds, $F_0\bmod 2$ may be a better choice as a difficult decision
problem than $ED$.

Our lower bounds for frequency moment computation hold for randomized
algorithms even with small success probability $2^{-O(S)}$ and for the average
time and space used by deterministic algorithms on inputs in which the values
are independently and uniformly chosen from $[n]$.  (For comparison with the
latter average case results, it is not hard to show that over the same input
distribution
$ED$ can be solved with $\overline T \in \tilde O(n/\overline S )$ and
our reduction shows that this can be extended to
$\overline T \in \tilde O(n/\overline S)$ bound for $ED^{\boxplus n}$
on this input distribution.)

We complement our lower bound with a comparison-based RAM algorithm for
any $F_k^{\boxplus n}$ that has $T\in \tilde O(n^2/S)$, showing that
this is nearly an asymptotically tight bound, since it provides a general RAM
algorithm that runs in the same time complexity.
Since our algorithm for computing $F_k^{\boxplus n}$ is comparison-based,
the comparison lower bound for $F_k$ implied by~\cite{yao88} is not far from
matching our algorithm even for a single instance of $F_k$.
We also provide a quantum algorithm for $F_0^{\boxplus n}$ with
$T\in \tilde O(n^{3/2}/S^{1/2})$.

It is interesting to understand how the complexity of computing a function $F$
can be related to that of computing $F^{\boxplus n}$.
To this end, we consider problems of computing the $t^{th}$ order statistic in
each window.  For these problems we see the full range of relationships between
the complexities of the original and sliding-window versions of the problems.
In the case of $t=n$ (maximum) or $t=1$ (minimum) we show that
computing these properties over sliding windows can be solved by a comparison
based algorithm in $O(n\log n)$
time and only $O(\log n)$ bits of space so there is very little growth in
complexity.
In contrast, we show that a
$T \in \Omega(n^2/S)$ lower bound holds
when $t  = \alpha n$ for any fixed $0 < \alpha < 1$.
Even for algorithms that only use comparisons, the expected time for
errorless randomized algorithms to find the median in a single window is 
$\overline T\in \Theta(n\log\log_S n)$~\cite{chan:selection-journal}.
Hence, these problems have a dramatic
increase in complexity over sliding windows.\\[-5ex]

\paragraph{Related work}
While sliding-windows versions of problems have been considered in the context
of online and approximate computation, there is little research that has
explicitly considered any such problems in the case of exact offline
computation.
One instance where a sliding-windows problem has been considered is
a lower bound for generalized string
matching due to Abrahamson~\cite{abr87}.
This lower bound implies that for any fixed string $y\in [n]^n$
with $n$ distinct values, $H_y^{\boxplus n}$ requires
$T\cdot S\in \Omega(n^2/\log n)$ where decision problem $H_y(x)$ is 1 if and
only if the Hamming distance between $x$ and $y$ is $n$.
This bound is an $\Omega(\log n)$
factor smaller than our lower bound for sliding-window $F_0\bmod 2$.

\paragraph{Frequency Moments, Element Distinctness, and Order Statistics}
Let $a =  a_1a_2\ldots a_n\in D^n$ for some finite set $D$. 
We define the \textit{$k^{th}$ frequency
moment} of $a$, $F_k(a)$, as $F_k(a)= \sum_{i\in D} f_i^k$, where $f_i$ is the
frequency (number of occurrences) of symbol $i$ in the string $a$ and $D$ is
the set of symbols that occur in $a$.
Therefore, $F_0(a)$ is the number of distinct
symbols in $a$ and $F_1(a) = |a|$ for every string $a$.
The \textit{element distinctness} problem is a decision problem
defined as: $ED(a) =1  \mbox{ if } F_0(a) = |a| \mbox{ and } 0
\mbox{ otherwise}.$
We write $ED_n$ for the $ED$ function restricted to inputs $a$ with $|a|=n$.
The \textit{$t^{th}$ order statistic} of $a$,
$O_t$, is the $t^{th}$ smallest symbol in $a$.
Therefore $O_n$ is
the maximum of the symbols of $a$ and $O_{\lceil \frac{n}{2}\rceil}$
is the median.\\[-5ex]

\paragraph{Branching programs}
Let $D$ and $R$ be finite sets and $n$ and $m$ be two positive integers.
A \textit{$D$-way branching program} is a
connected directed acyclic graph with special nodes: the
\textit{source node} and possibly many \textit{sink nodes}, a set of
$n$ inputs and $m$ outputs.
Each non-sink node is labeled with an input index and every edge is
labeled with a symbol from $D$, which corresponds to the value of
the input indexed at the originating node.
In order not to count the space required for outputs, as is standard for the
multi-output problems~\cite{bc:sorting},
we assume that each edge can be labeled by some set of output assignments.
For a directed path $\pi$ in a branching program, we call the set of indices of
symbols queried by $\pi$ the \emph{queries} of $\pi$, denoted by $Q_{\pi}$;
we denote the \emph{answers} to those queries by $A_{\pi}:Q_\pi \rightarrow D$
and the outputs produced along $\pi$ as a \emph{partial}
function $Z_\pi:[m]\rightarrow R$.

A branching program computes a function $f:D^n\rightarrow R^m$ by
starting at the source and then proceeding along the nodes of the
graph by querying the inputs associated with each node and following
the corresponding edges.
In the special case that there is precisely one output, without loss of
generality, any edge with this output may instead be assumed to be unlabeled
and lead to a unique sink node associated with its output value.

A branching program $B$ computes a function $f$
if for every $x \in D^n$, the output of $B$ on $x$, denoted $B(x)$,
is equal to $f(x)$.
A \textit{computation} of $B$ on $x$ is a
directed path, denoted $\pi_B(x)$, from the source to a sink in $B$ whose
queries to the input are consistent with $x$.
The time $T$ of a branching program is the length of the
longest path from the source to a sink and the space $S$ is the
logarithm base 2 of the number of the nodes in the branching
program. Therefore, $S \geq \log T$ where we write $\log x$ to denote
$\log_2 x$.

A \textit{randomized} branching program
$\mathcal{B}$ is a probability distribution over deterministic
branching programs with the same input set. $\mathcal{B}$
computes a function $f$ with error at most $\eta$ if for every
input $x \in D^n$, $\Pr_{B\sim\mathcal{B}}[B(x) = f(x)] \geq
1-\eta$. The time (resp. space) of a randomized branching program
is the maximum time (resp. space) of a deterministic branching program
in the support of the distribution.  

While our lower bounds apply to randomized branching programs,
which allow the strongest non-explicit randomness, our randomized algorithms
for element distinctness will only require a weaker notion, {\em input randomness},
in which the random string $r$ is given as an explicit input to a RAM
algorithm.  For space-bounded computation, it would be preferable 
to only require the random bits to be available online as the algorithm
proceeds.

As is usual in analyzing randomized computation via Yao's lemma, we also will
consider complexity under distributions $\mu$ on the input space $D^n$.
A branching program $B$ \textit{computes $f$ under
$\mu$ with error at most $\eta$} iff $B(x)= f(x)$ for all but an
$\eta$-measure of $x \in D^n$ under distribution $\mu$.

A branching program is \textit{leveled} if  the nodes are divided
into an ordered collection of sets each called a \textit{ level}
where edges are between consecutive levels only.
Any branching program can be leveled by increasing the space
$S$ by an additive $\log T$. Since $S \geq \log T$, in
the following we assume without loss of generality that our branching programs
are leveled.

A \textit{comparison branching program} is similar to a $D$-way
branching program except that each non-sink
node is labeled with a pair of indices
$i,j\in [n]$ with $i<j$ and has precisely three out-edges with labels from
$\{<,=,>\}$,
corresponding to the relative position of inputs $x_i$ and $x_j$ in the
total order of the inputs.

\section{Element Distinctness and Small-Space Collision-Finding }\label{sec:ed}

\subsection{Efficient small-space collision-finding with many sources}

Our approach for solving the element distinctness problem has at its heart a
novel extension of Floyd's small space ``tortoise and hare'' cycle-finding
algorithm~\cite{knu2}.
Given a start vertex $v$ in a finite graph $G=(V,E)$ of outdegree 1,
Floyd's algorithm finds the unique cycle in $G$
that is reachable from $v$.
The out-degree 1 edge relation $E$ can be viewed as a
set of pairs $(u,f(u))$ for a function $f:V\rightarrow V$. 
Floyd's algorithm, more precisely, stores only two values from $V$ and
finds the smallest $s$ and $\ell>0$ and vertex $w$ such
such that $f^s(v)=f^{s+\ell}(v)=w$ using only $O(s+\ell)$ evaluations of $f$.

We say that vertices $u\ne u'\in V$ are {\em colliding} iff
$f(u)=f(u')$ and call $v=f(u)=f(u')$ a collision.
%that $f(u)$ triple $(u,u',f(u))$ is a {\em collision} of $f$.
Floyd's algorithm for cycle-finding can also be useful for finding collisions since
in many instances the starting vertex $v$ is not on a cycle and thus $s>0$.
In this case $i=f^{s-1}(v)\ne j=f^{s+\ell-1}(v)$ satisfy $f(i)=f(j)=w$, which is
a collision in $f$, and the iterates of $f$ produce a $\rho$ shape (see
Figure~\ref{fig:multiFloyda}).  
These colliding points may be found with minimal cost
by also storing the previous values of each of the two pointers as Floyd's
algorithm proceeds.
The $\rho$ shape inspired the name of
Pollard's rho algorithm for factoring~\cite{pollard75} and solving discrete logarithm problems~\cite{pollard78}. This in turn is also the commonly associated name for the application of Floyd's cycle finding algorithm to the general problem of collision-finding. 

Within the cryptographic community there is an extensive body of work building
on these early advances of Pollard.
There is also considerable research on cycle detection algorithms that use
larger space than Floyd's algorithm  and improve the constant factors in the
number of edges that must be traversed (function evaluations) 
to find the cycle (see for example~\cite{brent80, ssy82,nivasch04} and
references therein).
In these applications, the goal is to find a single collision reachable from a given starting point as efficiently as possible.  

The closest existing work to our problem tackles the problem of speeding up
collision detection in random functions using parallelization.
In~\cite{ow99}, van Oorschot and Wiener
give a deterministic parallel algorithm for finding all collisions
of a random function using $k$ processors, along with a heuristic analysis of
its performance.  
%There are some similarities and crucial differences that a serialization of
%their algorithm would have to our solution. 
Their algorithm keeps a record of visits to predetermined  vertices
(so-called `distinguished points')  which allow the separate processes 
to determine quickly if they are on a path that has been explored before. 
%These distinguished points can be seen to be analogous to the record of
%colliding vertices which we will store in our algorithm. 
%Further, their heuristic argument suggests  a bound of $O(n^{3/2}/k^{1/2})$
Their heuristic argument suggests  a bound of $O(n^{3/2}/k^{1/2})$
function iterations, though it is unclear whether this can ever be made
rigorous. 
Our method is very different in detail and developed independently: 
we provide a fully rigorous analysis that roughly matches the heuristic bound
of~\cite{ow99} by using a new deterministic algorithm for collision-finding
on random hash functions applied to worst-case inputs for element distinctness.
%This heuristic bound roughly matches the bounds we prove when we apply our
%deterministic algorithm for collision-finding to 
%random hash functions applied to worst-case inputs for element distinctness,
%though the algorithm we present is different in detail and developed
%independently.
%It is an open question of considerable interest whether there exist
%deterministic solutions for worst case inputs with similar bounds.

As part of our strategy for efficiently solving element distinctness,
we examine the time and space complexity of finding {\em all} colliding
vertices, along with their predecessors, in the subgraph reachable from a
possibly large set of $k$ starting vertices,
not just from a single start vertex.
We will show how to do this using storage equivalent to only $O(k)$ elements
of $V$ and time roughly proportional to the size of the subgraph
reachable from this set of starting vertices.
Note that the obvious approach of running $k$ independent copies of Floyd's
algorithm in parallel from each of the start vertices does not solve this
problem since it may miss collisions between different parallel branches (see
Figure~\ref{fig:multiFloydb}), and
it may also traverse large regions of the subgraph many times.

For $v\in V$, define $f^*(v)=\{f^i(v)\mid i\ge 0\}$ to be the set of
vertices reachable from $v$ and $f^*(U)=\bigcup_{v\in U} f^*(v)$ for
$U\subseteq V$. 

\begin{figure}[t]
\centering
\begin{tabular}{@{}c|c@{}}
   \subcaptionbox{\label{fig:multiFloyda}}{\includegraphics[width = 0.45\columnwidth, height=0.15\columnwidth]{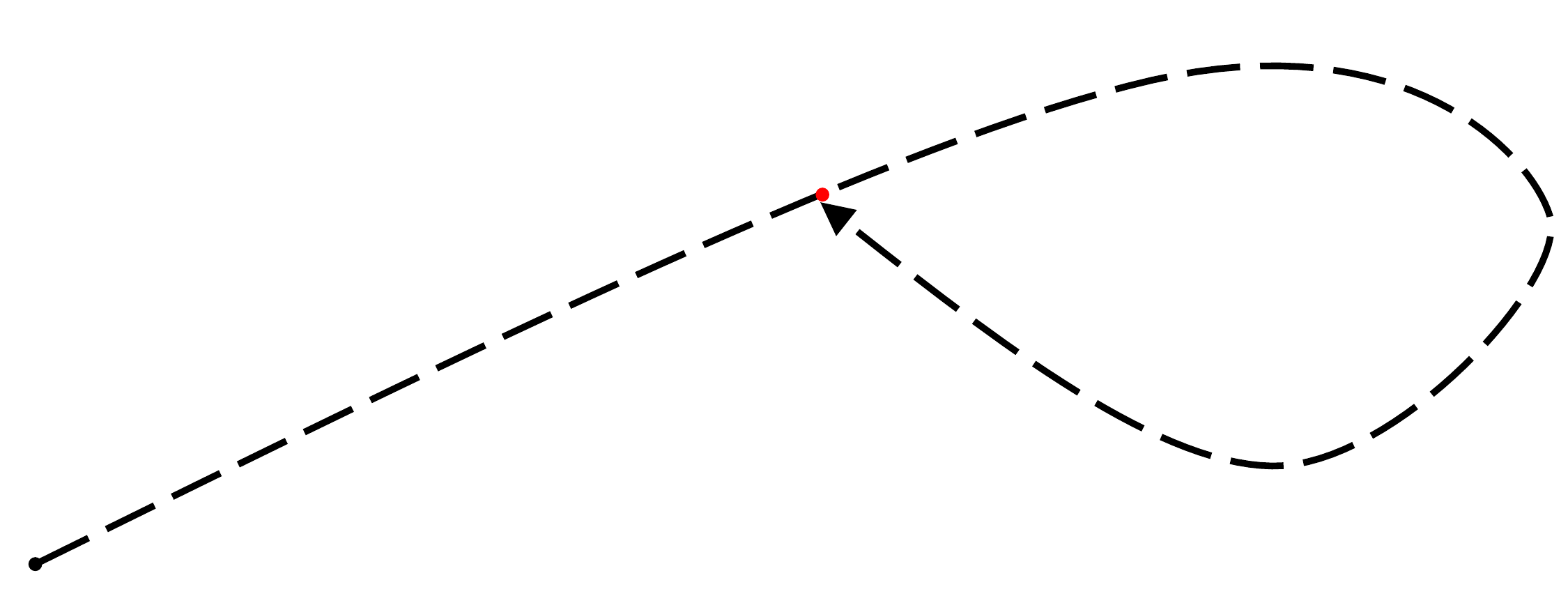}}&
   \subcaptionbox{\label{fig:multiFloydc}}{\includegraphics[width = 0.45\columnwidth, height=0.15\columnwidth]{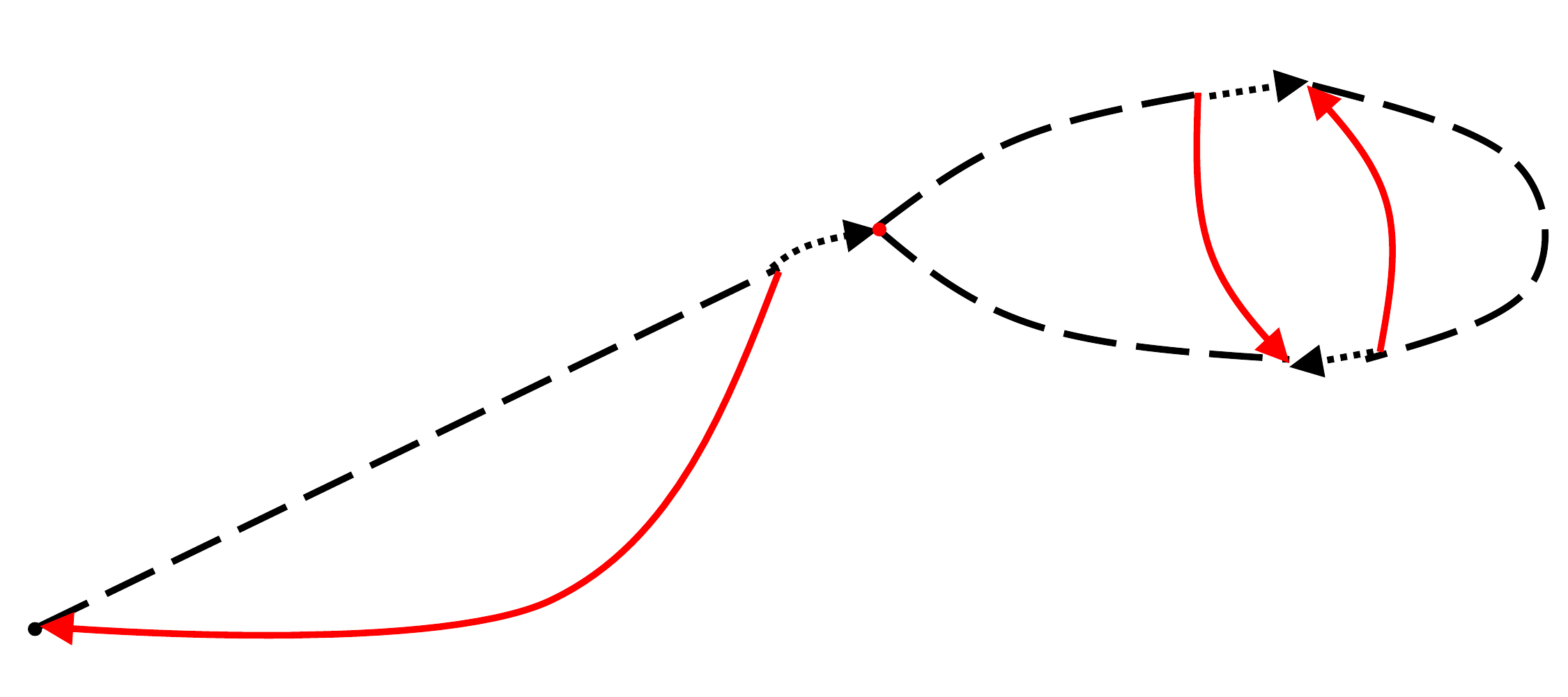}}\\
   \hline
  & \\[-1.88ex]
   \subcaptionbox{\label{fig:multiFloydb}}{\includegraphics[width = 0.45\columnwidth, height=0.25\columnwidth]{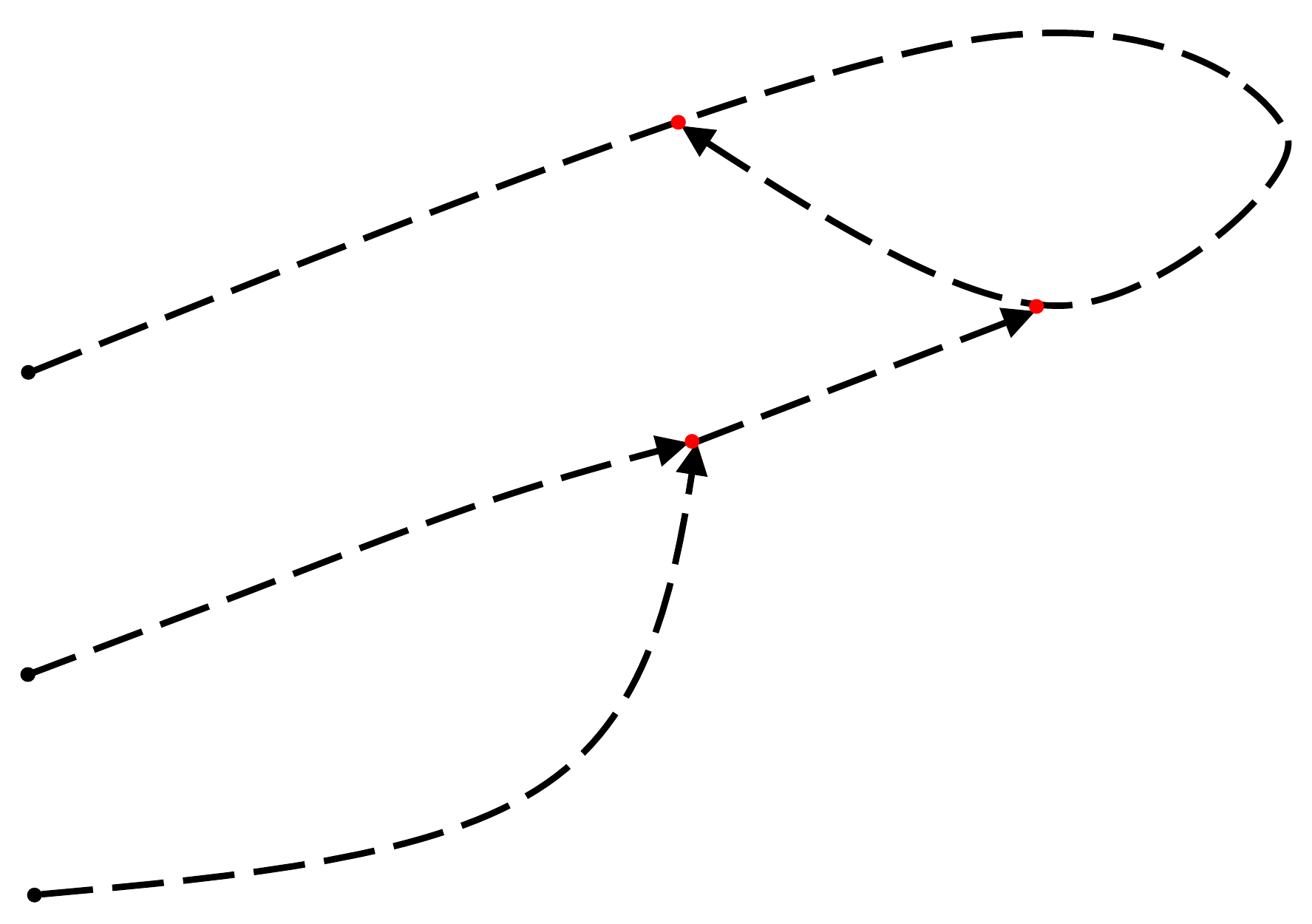}}&
   \subcaptionbox{\label{fig:multiFloydd}}{\includegraphics[width = 0.45\columnwidth, height=0.25\columnwidth]{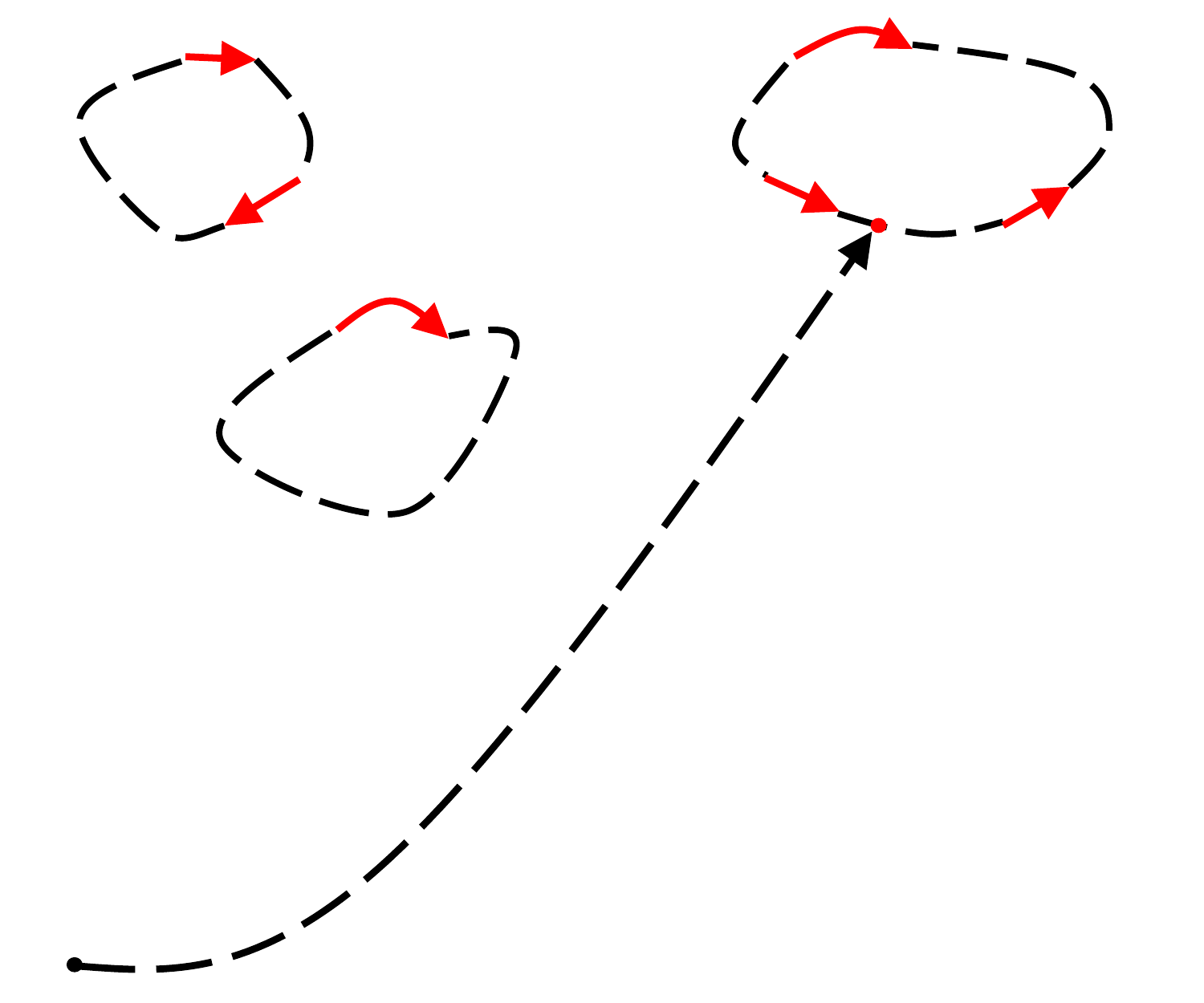}}
   \end{tabular}
\caption{\small Collision-finding with multiple sources}
\label{fig:multiFloyd}
\end{figure}

\begin{theorem}
\label{graph-multiFloyd}
There is an $O(k\log n)$ space deterministic algorithm {\sc Collide}$_k$ that,
given $f:V\rightarrow V$ for a finite set $V$ and
$K=\{v_1,\ldots, v_k\}\subseteq V$, 
finds all pairs $\left(v,\left\{u \in f^*(K) \middle| f(u) = v\right\}\right)$  and runs in time
$O(| f^*(K)|\log k \min\{k,\log n\})$.
\end{theorem}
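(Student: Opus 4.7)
My plan is to design a variant of Floyd's tortoise-and-hare that simulates $k$ processes concurrently, coordinated through a shared balanced binary search tree storing $O(k)$ ``landmark'' vertices from $V$. This structure fits in the $O(k\log n)$ space budget, each lookup or insertion costs $O(\log k)$ time (accounting for one factor in the time bound), and it is what distinguishes the algorithm from $k$ independent Floyds, which would miss cross-process collisions of the type illustrated in Figure~\ref{fig:multiFloydb}.

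The base case $k=1$ is handled by classical Floyd augmented to cache the previous values of the tortoise and hare, so that at the moment they meet, the algorithm can report the colliding pair $(f^{s-1}(v_1), f^{s+\ell-1}(v_1))$ with target $f^s(v_1)$; this uses $O(\log n)$ space and $O(|f^*(v_1)|)$ iterations of $f$, matching the claim since $\log k = 0$ and $\min\{k,\log n\}=1$. For the inductive step I would process all $k$ starts in parallel rounds: in each round, advance every live hare by two $f$-steps and every live tortoise by one, then for each live process check its new hare against (i) its own tortoise (self-cycle, handled as in Floyd) and (ii) the BST of landmarks (cross-collision with a previously explored region). Whichever condition fires causes that process to terminate, the colliding pair to be output from the cached predecessors, and the collision target $v$ to be promoted to a new landmark in the BST so that later processes can detect convergence to it.

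The main obstacle is the running-time analysis, in particular justifying the $\min\{k,\log n\}$ factor. I would split into two regimes. For $k\le \log n$, give each edge of $f^*(K)$ a budget of $k$ traversals and argue that every process traverses each edge at most a constant number of times before being extinguished by a landmark or by meeting its own tortoise, yielding the $k$ factor. For $k>\log n$, use the Floyd bound that each hare reaches its collision within $O(|f^*(v_i)|)$ evaluations; combined with the fact that landmarks, once planted, cap any subsequent re-exploration of an already-explored region at $O(\log n)$ iterations (the depth of the doubling cycle search before the hare either collides with its own tortoise or hits the landmark), this yields the $\log n$ factor. Multiplying by the $O(\log k)$ cost of each BST operation gives the required $O(|f^*(K)|\log k\cdot \min\{k,\log n\})$ bound.

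Finally, for completeness I would verify by induction on the round in which the second of any two colliding vertices $u\ne u'$ with $f(u)=f(u')=v$ is visited that the pair is reported: at that moment, either the earlier visitor's process is still live and the simultaneous BST inspection detects the coincidence, or $v$ has already been promoted to a landmark by an earlier termination, in which case the landmark test fires. In both cases the predecessors cached along each live process path recover $u$ and $u'$ and allow them to be assembled into the set-valued output $(v,\{u\in f^*(K) : f(u)=v\})$ required by the theorem.
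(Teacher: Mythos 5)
Your proposal takes a genuinely different route from the paper's: you run $k$ tortoise-and-hare processes concurrently with a shared BST of \emph{landmarks} (one per terminated process), whereas the paper processes the start vertices sequentially, maintains a \emph{redirection list} that provisionally rewires $O(k)$ out-edges, and after each Floyd run explicitly \emph{splits the traversed cycle in half} by adding two more redirections. That cycle-splitting step is not cosmetic --- it is exactly what produces the $\min\{k,\log n\}$ factor, and its absence is where your argument breaks.

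Concretely, two gaps. First, the running time: without cycle-splitting nothing caps repeated traversal of a long cycle. Take $f$ to be a single $n$-cycle and $k=n$ starts spread around it. Then $|f^*(K)|=n$, no process finds a genuine collision (so it is unclear what ``collision target'' you would promote to a landmark), and even if you promote the meeting point, a later process's hare still needs $\Theta(n)$ steps to reach any fixed landmark on an $n$-cycle. You therefore spend $\Theta(kn)=\Theta(n^2)$ edge-traversals, while the claimed bound is $O(n\log k\log n)$. Your appeal to ``the depth of the doubling cycle search'' does not correspond to anything in Floyd's algorithm and does not yield the $O(\log n)$ cap; in the paper that cap comes precisely from the invariant that the explored region is a union of disjoint cycles in the redirected graph, each of which is halved every time it is traversed, so every edge is re-walked at most $O(\min\{k,\log n\})$ times. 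Second, correctness: your BST holds only $O(k)$ landmarks (termination targets), not every visited vertex, so when process $j$'s path first merges into process $i$'s path at some vertex $v$ that is \emph{not} a landmark, the merge is invisible. Process $j$ will only notice something later, when its hare lands on process $i$'s landmark, and at that moment its cached hare-predecessor is the predecessor along the already-shared suffix --- not the new preimage $u'$ of $v$. Your final ``either live or already a landmark'' dichotomy does not cover this case, and the proposal gives no mechanism (e.g., a two-pointer backtrack using recorded path lengths) for locating the actual merge vertex $v$ and its two distinct preimages. The paper's redirection device sidesteps both problems at once: by rerouting $u$ back to $v_j$, every completed iteration leaves the explored set as a disjoint union of cycles, so Floyd from the next start provably returns to its origin, the merge point is identified as the unique collision on that $\rho$, and the subsequent split bounds future re-traversal.
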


\begin{proof}
We first describe the algorithm {\sc Collide}$_k$:
In addition to the original graph and the collisions that it finds, this
algorithm will maintain a {\em redirection list} $R\subset V$ of size $O(k)$ 
vertices that it will provisionally redirect to map to a different location.
For each vertex in $R$ it will store the name of the new vertex to which it is
directed.   We maintain a separate list $L$ of all vertices from which an edge
of $G$ has been redirected away and the original vertices that point to them.

\noindent
{\sc Collide}$_k$:\\
Set $R=\emptyset$.\\
For $j=1,\ldots,k$ do:
\begin{enumerate}
\item Execute Floyd's algorithm starting with vertex $v_j$ on the graph $G$
using the redirected out-edges for nodes from the redirection list $R$ instead
of $f$.
\item If the cycle found does not include $v_j$, there must be a collision.
\begin{enumerate}
\item If this collision is in the graph $G$, report the collision $v$ as well as the colliding vertices $u$ and $u'$, where $u'$ is the predecessor of $v$ on the cycle and $u$ is the
predecessor of $v$ on the path from $v_j$ to $v$.

%If $v$ is a vertex from which an edge has been redirected away and $u*$ was its
%predecessor in $G$, then record $(u,u*,v)$.
\item Add $u$ to the redirection list $R$ and redirect it to vertex $v_j$.
\end{enumerate}
\item Traverse the cycle again to find its length and choose two
vertices $w$ and $w'$ on the cycle that are within 1 of half this length apart.
Add $w$ and $w'$ to the redirection list, redirecting $w$ to $f(w')$ and
$w'$ to $f(w)$.   This will split the cycle into two parts, each of roughly
half its original length.
\end{enumerate}
The redirections for a single iteration of the algorithm are shown in
Figure~\ref{fig:multiFloydc}.
The general situation for later iterations in the algorithm is shown in
Figure~\ref{fig:multiFloydd}.

Observe that in each iteration of the loop there is at most one vertex $v$
where collisions can occur and at most 3 vertices are added to the
redirection list.  Moreover, after each iteration,
the set of vertices reachable from vertices $v_1,\ldots, v_j$ appear in
a collection of disjoint cycles of the redirected graph.
Each iteration of the loop traverses at most one cycle and
every cycle is roughly halved each time it is traversed.

In order to store the redirection list $R$, we use a dynamic dictionary
data structure of $O(k\log n)$ bits that supports insert and search in
$O(\log k)$ time per access or insertion.
We can achieve this using balanced binary search trees and we can improve
the bound to $O{(\sqrt{\log k/\log\log k})}$ using exponential
trees~\cite{at:exponentialtrees}.
Before following an edge (i.e., evaluating $f$), the algorithm will first
check list $R$ to see if it has been redirected.  Hence each edge traversed
costs $O(\log k)$ time (or $O{(\sqrt{\log k/\log\log k})}$ using
exponential trees).  Since time is measured relative to the size of the
reachable set of vertices,
the only other extra cost is that of re-traversing previously discovered edges.
Since all vertices are maintained in cycles and each traversal of a cycle
roughly halves its length, each edge found can be traversed at most
$O(\min\{k,\log n\})$ times.  
\end{proof}

As we have noted in the proof, with exponential trees the running time
of the algorithm can be reduced to
$O(| f^*(K)|\sqrt{\log k/\log\log k} \min\{k,\log n\})$, though the algorithm 
becomes significantly more complicated in this case.

\subsection{A randomized $T^2\cdot S\in \tilde O(n^3)$ element distinctness algorithm}
\label{sec:fastED}

We will use collision-finding for our element distinctness algorithm. 
In this case the vertex set $V$ will be the set of indices $[n]$, and the
function $f$ will be given by $f_{x,h}(i)=h(x_i)$ where $h$ is
a (random) hash function that maps $[m]$ to $[n]$.    

Observe that if we find $i\ne j$ such that $f_{x,h}(i)=f_{x,h}(j)$ then either
\begin{itemize}
\item $x_i=x_j$ and hence $\ED(x)=0$, or 
\item we have found a collision in $h$: $x_i\ne x_j$ but $h(x_i)=h(x_j)$;
\end{itemize}
We call $x_i$ and $x_j$ ``pseudo-duplicates" in this latter case.

Given a parameter $k$, on input $x$ our randomized algorithm will repeatedly
choose a random hash
function $h$ and a random set $K$ of roughly $k$ starting points and then
call the {\sc Collide}$_k$ algorithm given in
Theorem~\ref{graph-multiFloyd} on $K$ using the function $f=f_{x,h}$
and check the collisions found to determine whether or not there is a
duplicate among the elements of $x$ indexed by $f_{x,h}^*(K)$.
The space bound $S$ of this algorithm will be $O(k\log n)$. 

The running time of {\sc Collide}$_k$ depends on 
$|f_{x,h}^*(K)|$, which in turn is governed by the random choices of $h$
and $K$ and may be large.   
Since $f_{x,h}^*(K)$ is also random, we also need to argue that if $ED(x)=0$,
then there is a reasonable probability that a duplicate in $x$ will be
found among the indices in $f_{x,h}^*(K)$.
The following two lemmas analyze these issues.

\begin{lemma}
\label{closure}
Let $x\in [m]^n$.   For $h:[m]\rightarrow [n]$ chosen uniformly at random
and for $K\subseteq [n]$ selected by uniformly and independently
choosing $2\le k\le n/32$ elements of $[n]$ with replacement,
$\Pr[|f_{x,h}^*(K)|\le 2\sqrt{kn}]\ge 8/9$.
\end{lemma}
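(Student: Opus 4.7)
The plan is to analyze the exploration underlying $f^*_{x,h}(K)$ by a lazy reveal of $h$, and then bound the tail of $|f^*_{x,h}(K)|$ via a birthday-style moment / concentration argument. I would process the starting points $v_1,\ldots,v_k$ sequentially while maintaining the set $E\subseteq[n]$ of vertices visited so far: for each iteration from a fresh $v_j$, advance via $f(u)=h(x_u)$, revealing $h(x_u)$ uniformly in $[n]$ the first time that argument $x_u$ is needed. The key structural observation is that once $h(x_u)$ has been revealed, its value necessarily lies in $E$ (the step that revealed it either landed in $E$ and terminated that iteration, or landed outside $E$, in which case that vertex was immediately added to $E$). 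Hence any later deterministic lookup of an already-revealed $h$-value ends the current iteration at once, and the only random events driving the growth of $E$ are the fresh uniform reveals.

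Let $S_j=|E|$ after the $j$th iteration and $\ell_j=S_j-S_{j-1}$. Conditional on $S_{j-1}=s$, the starting vertex $v_j$ is fresh with probability $(n-s)/n$; if fresh, each subsequent extension succeeds (the fresh reveal misses the current $E$) with probability at most $1-(s+i)/n$ after $i$ vertices have been added in this iteration. A birthday-style geometric bound then yields
\[
\E[\ell_j\mid S_{j-1}=s]\;\le\;(1-s/n)\Bigl(1+\sum_{t\ge1}(1-s/n)^t\Bigr)\;\le\;n/s,
\]
and an analogous estimate gives $\E[\ell_j^2\mid S_{j-1}=s]=O((n/s)^2)$, which on telescoping implies $\E[S_k^2]=O(nk)$.

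To turn these moment bounds into the claimed $1/9$ tail, a plain Markov on $\E[S_k]$ or $\E[S_k^2]$ is not sharp enough, so I would combine them with a stochastic-dominance reduction: any forced equality $f(u)=f(u')$ arising from $x_u=x_{u'}$ only terminates iterations earlier, so $|f^*_{x,h}(K)|$ is stochastically dominated by $|f^*(K)|$ in the model where $f:[n]\to[n]$ is a uniformly random function. There the fresh reveals are genuinely iid uniform on $[n]$, and the event $|f^*(K)|>2\sqrt{kn}$ forces the number of collisions among the first $2\sqrt{kn}$ uniform reveals to fall below the number of fired iterations (at most $k$); a Chernoff-type bound on this collision count (whose expectation is $\approx 2k$) then delivers the $\le 1/9$ failure probability. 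The main obstacle is exactly this last step, since the first- and second-moment Markov bounds give only constant-factor tails far looser than $1/9$; achieving the right constant will require either a tight variance estimate combined with a stopping-time truncation of rare large $\ell_j$ spikes, or the iid-uniform reduction together with a sharp concentration estimate on the collision count.
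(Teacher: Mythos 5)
Your approach is essentially the paper's. Both lazily reveal $h$ on first use, observe that the exploration lengths are governed by a birthday process, and bound the tail via a Chernoff-type concentration argument. Your stochastic-dominance reduction to a uniformly random $f:[n]\to[n]$ is a tidy way to discharge the duplicate case; the paper handles the same point inline by having its simulated exploration terminate an iteration immediately whenever a previously seen $x$-value is revisited, which only increases the termination rate.

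The step you flag as ``the main obstacle'' is exactly the step the paper carries out, and the structure makes it cleaner than you seem to fear. The quantity to concentrate is not the pairwise collision count but the \emph{sequential hit count}: the number of fresh uniform reveals that land in the already-visited index set $I$, which equals the number of iterations that have terminated. Two facts give the needed structure: (i) at the $j$-th fresh reveal, the conditional probability of a hit given all history is exactly $|I|/n$; and (ii) $|I|$ grows by one per reveal except at the at-most-$k$ terminations, so $|I|\ge j-k$ at the $j$-th reveal. Consequently the hit indicators stochastically dominate a sum of independent Bernoulli variables with parameters $(j-k)/n$, and a standard lower-tail Chernoff bound on this dominated sum shows that fewer than $k$ hits occur within $2\sqrt{kn}$ reveals with probability $e^{-\Omega(k)}$ (the paper computes $e^{-9k/8}$), which for $k\ge 2$ is below $1/9$. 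No variance truncation or sharp second-moment estimate is required; the conditional-independence structure already yields the exponential tail. Finally, if by ``collisions'' you had the \emph{pairwise} count in mind, note that it only upper bounds the hit count, so a lower-tail bound for the pairwise count would not by itself give the bound you need on the hit count.
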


\begin{lemma}
\label{found-duplicate}
Let $x\in [m]^n$ be such that $ED(x)=0$.  
Then for $h:[m]\rightarrow [n]$ chosen uniformly at random
and for $K\subseteq [n]$ selected by uniformly and independently
choosing $2\le k\le n/32$ elements of $[n]$ with replacement,
$\Pr[ |f_{x,h}^*(K)|\le 2\sqrt{kn} \mbox{ and  
$\exists i\ne j\in f_{x,h}^*(K)$ such that $x_i=x_j$}]\ge k/(18n)$.
\end{lemma}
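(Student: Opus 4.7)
The plan is to fix a pair $i^* \ne j^*$ with $x_{i^*} = x_{j^*} = v^*$, which exists since $ED(x)=0$, and lower-bound the joint probability that $A = \{|f_{x,h}^*(K)| \le 2\sqrt{kn}\}$ and $B' = \{i^*, j^* \in f_{x,h}^*(K)\}$ both hold; this implies the claim since $B' \subseteq B$. The key structural observation is that $f_{x,h}(i^*) = f_{x,h}(j^*) = h(v^*)$, so the forward trajectories from $i^*$ and $j^*$ merge after a single step.

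For each vertex $u$ and fixed $h$, let $T_u(h) = \{l \in [n] : f_{x,h}^s(l) = u \text{ for some } s \ge 0\}$ be its ancestor set, so that $u \in f_{x,h}^*(K)$ iff $K \cap T_u \ne \emptyset$. Since $K$ consists of $k$ i.i.d.\ uniform draws from $[n]$, inclusion-exclusion yields
\[
\Pr_K[B' \mid h] = 1 - (1 - |T_{i^*}|/n)^k - (1 - |T_{j^*}|/n)^k + (1 - |T_{i^*} \cup T_{j^*}|/n)^k,
\]
which, to leading order, equals $k\,|T_{i^*} \cap T_{j^*}|/n + \binom{k}{2}\cdot 2\,|T_{i^*}|\,|T_{j^*}|/n^2$. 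Averaging over random $h$, I would use that the functional graph of $f_{x,h}$ closely resembles that of a random function on $[n]$ (so that $\sum_u \E_h[|T_u|] = \Theta(n^{3/2})$ and typical $|T_u|$ is $\Theta(\sqrt n)$) to establish $\E_h[|T_{i^*}|\,|T_{j^*}|] = \Omega(n)$, giving $\Pr[B'] = \Omega(k^2/n)$.

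The main obstacle is coupling this capture estimate with the size event $A$. The naive step $\Pr[A \cap B'] \ge \Pr[B'] - \Pr[A^c]$ yields only $\Omega(k^2/n) - 1/9$, which is useless when $k \ll \sqrt n$. I would therefore analyze the two events jointly, conditioning on $h$ first: for each $h$, simultaneously control $\Pr_K[A \mid h]$ and $\Pr_K[B' \mid h]$ through the shared quantities $|T_{i^*}|, |T_{j^*}|$, and $|f_{x,h}^*([n])|$, then integrate over $h$. Using that a typical $h$ produces both a concentrated reachable-set size and $|T_{i^*}|\,|T_{j^*}| = \Omega(n)$ in expectation, this joint bound should deliver the target $k/(18n)$, with the specific constant $1/18$ arising from the explicit constants in Lemma~\ref{closure} and in the Taylor expansion above.
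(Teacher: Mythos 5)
Your proposal takes a genuinely different route from the paper, but it has gaps that would be hard to close, and at least one step fails outright for adversarial inputs $x$.

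The central problem is the claim that ``the functional graph of $f_{x,h}$ closely resembles that of a random function on $[n]$.'' This is false in general because $x$ is adversarial. Take $x=(1,1,\ldots,1)$: then $f_{x,h}(i)=h(1)$ for every $i$, so the functional graph is a star with all of $[n]$ pointing to the single vertex $h(1)$. In that case $T_{i^*}$ and $T_{j^*}$ are each either empty or all of $[n]$, and since $i^*\ne j^*$ at most one of them can equal $h(1)$, so $|T_{i^*}|\cdot|T_{j^*}|=0$ always. Your quantity $\E_h[|T_{i^*}||T_{j^*}|]$ is then $0$, not $\Omega(n)$, and indeed your target event $B'=\{i^*,j^*\in f^*_{x,h}(K)\}$ has probability only $\Theta(k^2/n^2)$, far below $k/(18n)$. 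The lemma itself is still true in this case (any two distinct sampled starting indices give a duplicate), but your strategy of fixing a single pair $(i^*,j^*)$ in advance throws away too much. A monotonicity/reduction argument to the case of a unique duplicate pair would be needed, as the paper does, and you do not supply one; and even in the near-distinct case, your two key quantitative claims --- $\E_h[|T_{i^*}||T_{j^*}|]=\Omega(n)$ and the joint control of $A$ and $B'$ through shared statistics of $h$ --- are stated as plausible heuristics rather than proved, and both are nontrivial (the ancestor-tree distribution in random functional graphs is heavy-tailed, so products of tree sizes do not behave like the square of a ``typical'' size, and the correlation with the global reachable-set size is delicate).

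The paper avoids all of this by never reasoning about the static functional graph of $f_{x,h}$. Instead, it couples the random choice of $K$ and $h$ with the \emph{sequential} exploration process from the proof of Lemma~\ref{closure}, and models the exploration as a Markov chain with states $S(d,j,b)$ tracking the number of distinct indices seen, the number of pseudo-duplicates, and whether one member of the duplicate pair has been hit, with absorbing states $F$ (duplicate found) and $N$ (terminated without finding one). Lemma~\ref{closure} directly gives that the chain absorbs within $t^*=2\sqrt{kn}$ steps with probability $\ge 8/9$, so event $A$ is handled for free. The probability of reaching $F$ is then bounded by conditioning on the locations of the at most $k+1$ ``special'' transitions and comparing to $k$ Bernoulli trials of success probability $1/t^*$, which yields $k/(16n)$ conditionally and $k/(18n)$ overall. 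The final paragraph reduces the general-$x$ case to the unique-pair case by a stochastic-domination argument. This dynamic, process-based argument is exactly what makes the size bound and the hitting bound compatible, and it sidesteps the random-functional-graph machinery your outline would have to develop and justify.
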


Before we prove these lemmas we show how, together with the properties of 
{\sc Collide}$_k$, they yield the following theorem.

\begin{theorem}
\label{EDtradeoff}
For any $\epsilon>0$, and any $S$ with $c\log n\le S\le n/32$ for
some constant $c>0$, there is a randomized RAM algorithm with input randomness
computing  $ED_n$ with 1-sided error (false positives) at most
$\epsilon$,
that uses space $S$ and time $T\in O(\frac{n^{3/2}}{S^{1/2}}\log^{5/2} n \log (1/\epsilon))$.
Further, when $S\in O(\log n)$, we have
$T\in O(n^{3/2} \log (1/\epsilon))$.
\end{theorem}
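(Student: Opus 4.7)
My plan is to pick parameters $k, r$ and iterate at most $r$ times the following subroutine: sample from the input-randomness tape a uniformly random hash function $h:[m]\to[n]$ and $k$ independent uniform starting vertices forming a multiset $K\subseteq[n]$; invoke {\sc Collide}$_k$ on $f_{x,h}(i)=h(x_i)$ from $K$, aborting the call immediately if it ever visits more than $2\sqrt{kn}$ vertices; for each collision $v$ reported, inspect the associated predecessor set and output $0$ as soon as we find $u\ne u'$ there with $x_u = x_{u'}$. If none of the $r$ iterations produces an output, emit $1$. I set $k$ to the largest integer with $k \le \min\{n/32,\, c'S/\log n\}$ for a small constant $c'$ (so that $k\in\Theta(1)$ whenever $S\in O(\log n)$); this makes the $O(k\log n)$-bit redirection list of {\sc Collide}$_k$ fit in $O(S)$ bits, and $K$ itself need not be stored since its elements are drawn one at a time from the random input. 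I take $r = \lceil 18(n/k)\ln(1/\epsilon)\rceil$.

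Correctness: the algorithm emits $0$ only after exhibiting two indices with equal $x$-values, so there is never a false negative. Suppose $\ED(x)=0$. By Lemma~\ref{found-duplicate}, each iteration satisfies, with probability at least $k/(18n)$, both $|f_{x,h}^*(K)|\le 2\sqrt{kn}$ (so the cutoff does not trigger) and the existence of some $i\ne j \in f_{x,h}^*(K)$ with $x_i = x_j$. For such a pair we have $f_{x,h}(i) = h(x_i) = h(x_j) = f_{x,h}(j)$, so $i$ and $j$ are two predecessors of the same vertex $v=h(x_i)$ inside $f_{x,h}^*(K)$; Theorem~\ref{graph-multiFloyd} guarantees that $(v, \{u\in f_{x,h}^*(K) : f_{x,h}(u)=v\})$ appears among the reported pairs, so the equality scan over this predecessor set picks up $x_i = x_j$ and outputs $0$. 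The probability that all $r$ iterations fail to produce an output is therefore at most $(1 - k/(18n))^{r} \le \exp(-rk/(18n)) \le \epsilon$.

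Time: Lemma~\ref{closure} tells us $|f_{x,h}^*(K)|\le 2\sqrt{kn}$ holds with probability at least $8/9$, so the cutoff almost never fires and only costs constant factors in the overall analysis; in any case, with the cutoff enforced, Theorem~\ref{graph-multiFloyd} bounds a single call by $O(\sqrt{kn}\,\log k\,\min\{k,\log n\})$, and the equality scan over reported predecessor sets (sort each set by $x$-value and look for adjacent equals) fits within the same bound. Multiplying by $r = O((n/k)\log(1/\epsilon))$ gives total time $O(n^{3/2} k^{-1/2}\, \log k\, \min\{k,\log n\}\, \log(1/\epsilon))$. When $k\ge \log n$ this is $O(n^{3/2} k^{-1/2}\, \log^2 n\, \log(1/\epsilon))$, and substituting $k = \Theta(S/\log n)$ yields the claimed $O(n^{3/2} S^{-1/2}\, \log^{5/2} n\, \log(1/\epsilon))$; when $S\in O(\log n)$ we can take $k = O(1)$, in which case the per-iteration cost is $O(\sqrt{n})$ and the total is $O(n^{3/2}\log(1/\epsilon))$. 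The one part that requires care rather than routine calculation is threading the equality scan through the output of {\sc Collide}$_k$ so that it neither inflates the $O(S)$ space bound nor the stated running time; once that is in place, the rest is a clean composition of Theorem~\ref{graph-multiFloyd} with Lemmas~\ref{closure} and~\ref{found-duplicate}.
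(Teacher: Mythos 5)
Your proposal is correct and follows essentially the same route as the paper: choose $k = \Theta(S/\log n)$, run $O((n/k)\log(1/\epsilon))$ independent trials of {\sc Collide}$_k$ with a $2\sqrt{kn}$ vertex cutoff, invoke Lemma~\ref{found-duplicate} for the $k/(18n)$ per-trial success probability, and multiply the Theorem~\ref{graph-multiFloyd} per-call bound by the number of trials. The arithmetic substituting $k=\Theta(S/\log n)$ to obtain $O(n^{3/2}S^{-1/2}\log^{5/2}n\log(1/\epsilon))$ matches the paper's, as does the constant-$k$ case giving $O(n^{3/2}\log(1/\epsilon))$.
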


\begin{proof}
Choose $k\ge 2$ such that the space usage of {\sc Collide}$_k$ on $[n]$ is at
most $S/2$.    Therefore $k\in \Omega(S/\log n)$.  The algorithm is as follows:

\begin{verse}
On input $x$, run $(18n/k) \log (1/\epsilon)$ 
independent runs of {\sc Collide}$_k$ on different $f_{x,h}$, each with
independent random choices 
of hash functions $h$ and independent choices, $K$, of $k$ starting indices,
and each with a
run-time cut-off bounding the number of explored vertices of $f_{x,h}^*(K)$ at
$t^*=2\sqrt{kn}$.  \\
On each run, check if any of the collisions found is a duplicate in $x$,
in which case output $ED(x)=0$ and halt.  If none are found in any round then
output $ED(x)=1$.
\end{verse}

The algorithm will never incorrectly report a duplicate in a distinct $x$
and by Lemma~\ref{found-duplicate}, each run has a probability of at least
$k/(18n)$ of finding a duplicate in an input $x$ such that $ED(x)=0$ so the 
probability of failing to find a duplicate in $(18n/k) \log (1/\epsilon)$ rounds
is at most $\epsilon$.

\begin{sloppypar}
Using Theorem~\ref{graph-multiFloyd} each run of our bounded version of
{\sc Collide}$_k$, requires runtime $O(\sqrt{kn}\log k \min\{k,\log n\})$
and hence the total runtime of the algorithm is 
$O(\sqrt{kn}\cdot n/k \cdot \log k \min\{k,\log n\}\log (1/\epsilon))$ which is
$O(n^{3/2}\log(1/\epsilon))$ for $k$ constant and
$O(n^{3/2}/k^{1/2}\cdot \log^2 n \cdot \log(1/\epsilon))$ for general $k$.
The claim follows using $k\in \Omega(S/\log n)$.\\[-6ex]
\end{sloppypar}
\end{proof}

%\subsection{Proofs of Lemmas~\ref{closure} and~\ref{found-duplicate}}
%\label{sec:lemmaproofs}

To complete the proof of Theorem~\ref{EDtradeoff}, we now prove the two
technical lemmas on properties of $f_{x,h}^*(K)$
that the randomization in the hash functions. 
In particular, we first prove Lemma~\ref{closure} showing that for any $x$,
$f_{x,h}^*(K)$ is typically not larger than $2\sqrt{kn}$.

\begin{proof}[Proof of Lemma~\ref{closure}]
We can run an experiment that is equivalent to selecting $f_{x,h}^*(K)$ as
follows:  

\begin{tabbing}
Set $M=I=K=\emptyset$.\\
For \= $count=1$ to $k$ do\\
  \> Choose an element $i\in [n]$ uniformly at random and add it to $K$.\\
  \> While \= $i\notin I$ do\\
\>\>  Add $i$ to $I$\\
\>\>   If \= $x_i\notin M$ then \\
\>\>\> Add $x_i$ to $M$\\
\>\>\> Choose an element $i\in [n]$ uniformly at random.\\
\>\>     else exit while loop and output (``duplicate found'')\\
\>End While\\
End For\\
Output $I$.
\end{tabbing}

Observe that, when each new element $i$ is chosen at random, the probability
that a previous index is found (the while loop exits) is precisely $|I|/n$
and that $|I|$ increases by 1 per step, except for the $k$ steps when $i$ was
already in $I$ or $x_i$ was in $M$.
View each of these random choices of $i$ as a coin-flip with probability
of heads being $|I|/n$. 
% In particular, this means that for the $t$-th random choice, the probability of exiting the inner while loop is at least $(t-k)/n$.
% %Moreover, the probability at a given step is independent of the choices for
% %previous steps.
% The following simple proposition will give us an upper bound on $|I|$
% and hence on $|f_{x,h}^*(K)|$.

% 
% \begin{proposition}\label{prop:bernoulli}
% Fix positive integers $k\le n$.
% Let $Z_1,Z_2,\ldots,Z_n$ be independent Bernoulli random variables with
% $\Pr[Z_j=1]=j/n$.  If $ t = \lceil \sqrt{2kn/(1-\delta)} \rceil$ then
% \begin{equation}\label{ineq:bernoulli}
% \Pr\left[ \sum_{j=1}^{t } Z_j \le k \right]\le e^{-\frac{\delta^2 k}{2(1-\delta)}}.
% \end{equation}
% \end{proposition}
% 
% \begin{proof}
%  The expected number of successes in the first $t$ trials is $t(t+1)/2n$. Solving for $t$ we get that if $t = \lceil (\sqrt{1+8kn/(1-\delta)}-1)/2 \rceil \leq \lceil \sqrt{2kn/(1-\delta)} \rceil$  then  the expected number of successes is at least $k/(1-\delta)$. Inequality~\eqref{ineq:bernoulli} follows from an application of the Chernoff bound.
% \end{proof}

The index set size $|I|$ at the $j$-th random choice is at least $j-k$ as we
will have seen a duplicate in either $I$ or $M$ at most $k$ times.
Hence the probability of exiting the while loop is at least $(j-k)/n$. 
Therefore the expected number of while loop exits that have occurred when the
$t$-th random choice is made is at least $\sum_{j=1}^{t-k} j/n$.  
Consider for a moment our experiment but with the bound on the for loop
removed.
Then, solving for $t$  we get that the minimum number of random choices at
which the expected number of while loop exits is at least $k/(1-\delta)$ is
itself at least
$\lceil (\sqrt{1+8kn/(1-\delta)}-1)/2+k \rceil \le \sqrt{2kn/(1-\delta)} +k+1$.
Let $\delta = 3/4$ and observe that $n > 32k$ implies that
$ \sqrt{8kn/3}  +k +1 \le 2\sqrt{kn}$. 
Our experiment terminates when $\text{count} > k$ and so we can now bound the
number of random choices and hence $|f_{k,n}^*(K)|$.
By treating the event of exiting the while loop as independent coin flips
with the lower bound probability of $(j-k)/n$ at random choice $j$ and
applying the Chernoff inequality, we get that
$\Pr[|f_{k,n}^*(K)|> 2\sqrt{kn}]\le  e^{-9 k/8}$
as required to prove the lemma.
\end{proof}

Finally we prove Lemma~\ref{found-duplicate} showing that any duplicate in $x$
is also somewhat likely to be found in $f_{x,h}^*(K)$.  

\begin{proof}[Proof of Lemma~\ref{found-duplicate}]
We show this first for an $x$ that has a single pair of duplicated values and
then observe how this extends to the general case.
Consider the sequence of indices selected in the experiment in the proof of
Lemma~\ref{closure} on input $x$. 
We define an associated Markov chain with a state $S(d,j,b)$ for
$0\le d< n$, for $0\le j\le k-1$, and for $b\in \{0,1\}$,
where state $S(d,j,b)$ indicates that there have been $d$ distinct indices
selected so far, $j$ pseudo-duplicates, and $b$ is the number of indices from
the duplicated pair selected so far.   In addition, the chain has two
absorbing states, $F$, indicating that a duplicate has been found, and $N$,
indicating that no duplicates have been found at termination.

Observe that the probability in state $S(d,j,b)$ that the next selection is a
pseudo-duplicate is precisely $d/n$; hence it is also the probability of a
transition from $S(d,j,b)$ to $S(d,j+1,b)$ if $j<k-1$, or to $N$ if $j=k-1$.
From state $S(d,j,0)$ there is a $2/n$ chance of selecting one of the duplicated
pair, so this is the probability of the transition from $S(d,j,0)$ to
$S(d+1,j,1)$.  Finally, from state $S(d,j,1)$, the probability that the other
element of the duplicated pair is found is precisely $1/n$ and so this is 
the probability of a transition to $F$.  The remaining transition from state
from $S(d,j,b)$ leads to state $S(d+1,j,b)$ with probability $1-(d+2-b)/n$.
See Figure~\ref{fig:markov}.

\begin{figure}[t]
\begin{minipage}{2.5in}
\begin{center}
\includegraphics[scale = 0.35]{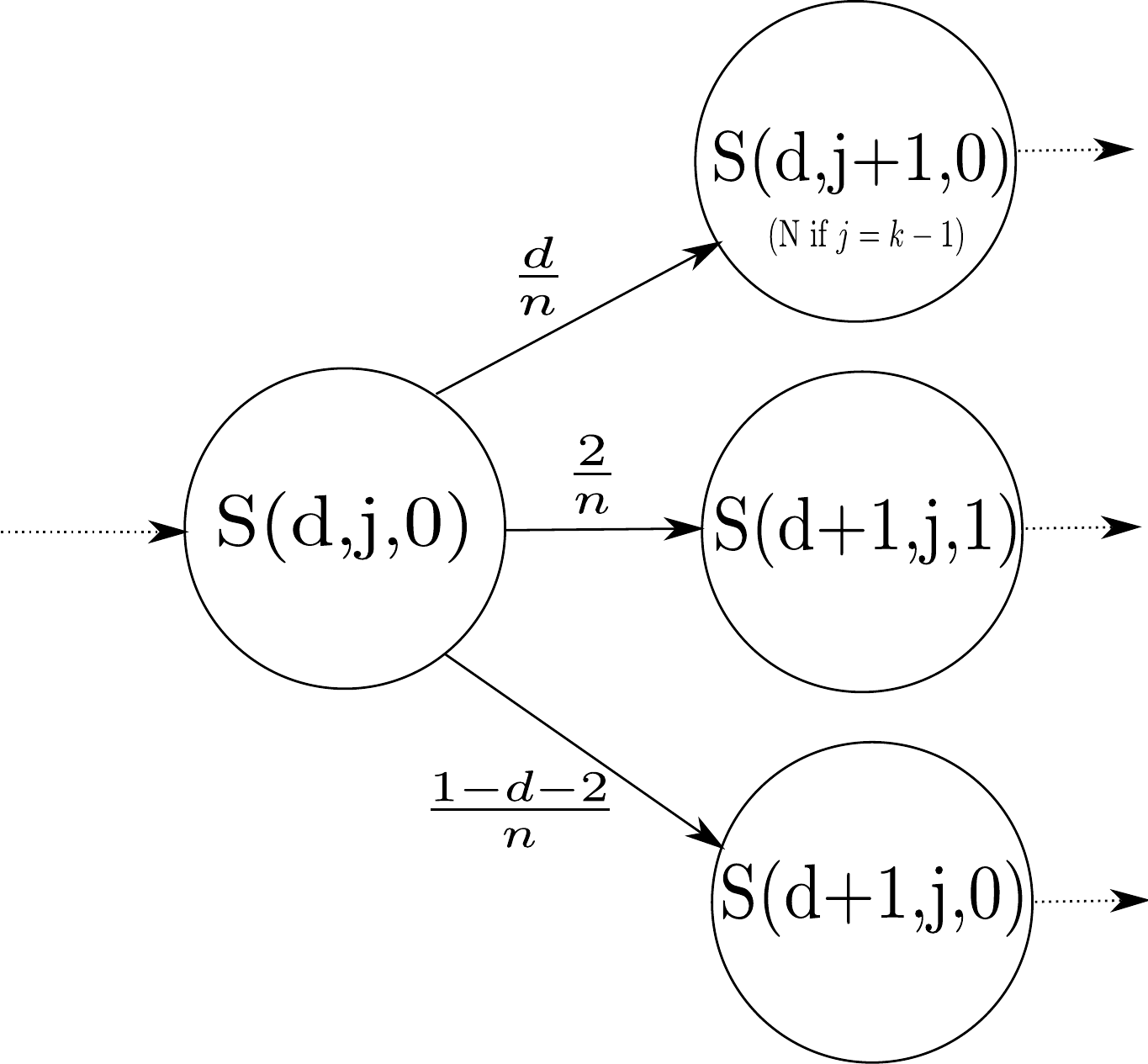}
\end{center}
\end{minipage}
\begin{minipage}{3.25in}
\begin{center}
\includegraphics[scale = 0.35]{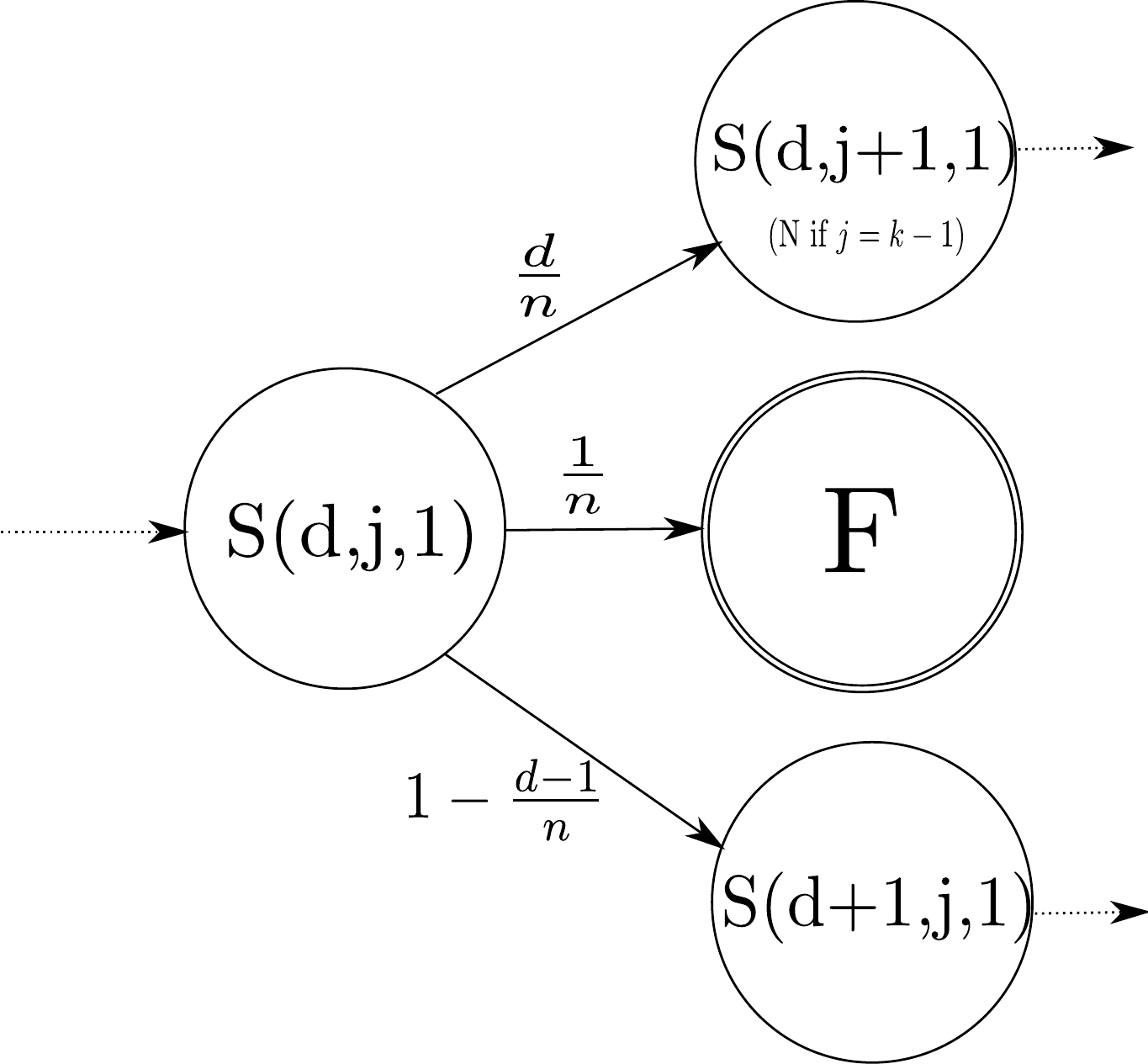}
\end{center}
\end{minipage}
\caption{\small Transitions from the Markov chain in the proof of Lemma~\ref{found-duplicate}}
\label{fig:markov}
\end{figure}

By the result of Lemma~\ref{closure}, with probability at least $8/9$, when
started in state $S(0,0,0)$ this chain
reaches state $F$ or $N$ in at most $t^*=2\sqrt{kn}$ steps.  The quantity
we wish to lower bound is the probability that the chain reaches state $F$ in
its first $t^*$ steps.
We derive our bound by comparing the probability that $F$ is reached in $t^*$
steps to the
probability that either $F$ or $N$ is reached in $t^*$ steps.

Consider all {\em special} transitions, those that increase $j$ or $b$, or
result in states $F$ or $N$.  
In any walk on the Markov chain that reaches $F$ or $N$, at most $k+1$
special transitions can be taken, and to reach $N$, at least
$k$ special transitions must be taken.

Fix any sequence $0\le d_1\le\ldots\le d_{k+1}<t^*$. 
We say that a walk on the Markov chain is
{\em consistent} with this sequence iff the sequence of starting states
of its special transitions is a prefix of some sequence of states
$S(d_1,*,*),\ldots, S(d_{k+1},*,*)$.   We condition on the walk
being one of these consistent walks.

Assume that $k\ge 2$.  
In order to reach state $F$, there must be one special transition that changes
$b$ from 0 to 1 and another that leads to $F$.
Consider the choices $(a,a')\in \binom{[k]}{2}$ of where these can
occur among the first $k$ special transitions.  The conditional probability
that a consistent special transition goes from $S(d_a,j_a,0)$ to
$S(d_a+1,j_a,1)$
is precisely $2/(d_a+2)\ge 1/t^*$ and the conditional probability that
it goes from $S(d_{a'},j_{a'},1)$ to $F$ is precisely $1/(d_{a'}+1)\ge 1/t^*$.
In particular, the conditional probability that two special transitions of these
types occur is lower bounded by the probability that $k$ Bernoulli trials 
with success probability $1/t^*$ yield at least two successes.
This lower bound is at least 
\begin{align*}
\binom{k}{2} 1/(t^*)^2 (1-1/t^*)^{k-2}&\ge 
[1-(k-2)/t^*]\cdot \binom{k}{2}/(t^*)^2\\
& \ge [1-(k-2)/(2\sqrt{kn})] k(k-1)/(8kn) \ge k/(16n).
\end{align*}
Since the different sequences correspond to disjoint sets of inputs,
conditioned on the event that $F$ or $N$ is reached in at most $t^*$ steps,
which occurs with probability at least $8/9$, the conditional
probability that $F$ is reached is $\ge k/(16n)$.  Therefore the
total probability that $F$ is reached in at most $t^*$ steps is
at least $k/(18n)$, as required\footnote{We have not tried to optimize these
constant factors. Also, one can similarly do a separate sharper analysis when
$k=1$.}.   

In the general case,
we observe that at each step prior to termination, the probability of
finding a pseudo-duplicate, given that $j$ of them have previously been found,
does not depend on $x$.  On the other hand, additional duplicated inputs in $x$
only increase the chance of selecting a first index that is duplicated, and
can only increase the chance that one of its matching indices will be selected
in subsequent choices. Therefore, finding a duplicate is at least as likely for
$x$ as it is in the above analysis.
\end{proof}

\section{Sliding Windows}

Let $D$ and $R$ be two finite sets and
$f:D^n\rightarrow R$ be a function over strings of length $n$. We
define the operation $\boxplus$ which
takes $f$ and returns a function $f^{\boxplus t}:D^{n+t-1}\rightarrow R^t$,
defined by
$f^{\boxplus t} (x) = \left(f(x_i\ldots x_{i+n-1})\right)_{i=1}^{t}$.
Because it produces a large number of outputs while less than doubling the
input size, we concentrate on the case that $t=n$ and
apply the $\boxplus$ operator to the statistical 
functions $F_k$, $F_k \bmod 2 $, $ED$, and $O_t$, the $t^{th}$ order statistic.
We will use the notation $F_k^{(j)}$
(resp. $f_i^{(j)}$) to denote the $k^{th}$ frequency moment (resp. the frequency of symbol $i$)
of the string in the window of length $n$ starting at position $j$.

\subsection{Element Distinctness over Sliding Windows}
\label{sec:slidingspacesaving}

The main result of this section shows that our randomized
branching program for $ED_n$ can even be extended
to a $T\in \tilde O(n^{3/2}/S^{1/2})$ randomized branching program for
its sliding windows version $ED_n^{\boxplus n}$.
We do this in two steps.
We first give a deterministic reduction which shows how the answer to
an element distinctness problem allows one to reduce the input size of
sliding-window algorithms for computing $ED_n^{\boxplus m}$.

\begin{lemma}
\label{windowED-size-reduction}
Let $n>m>0$.
\begin{enumerate}[(a)]
\item If $ED_{n-m+1}(x_m,\ldots, x_n)=0$ then
$ED_n^{\boxplus m}(x_1,\ldots, x_{n+m-1})=0^m$.
\item If $ED_{n-m+1}(x_m,\ldots, x_n)=1$ then define
\begin{enumerate}[i.]
\item $i_L=\max\{j\in [m-1]\mid ED_{n-j+1}(x_j,\ldots, x_n)=0\}$ where
$i_L=0$ if the set is empty and
\item $i_R=\min\{j\in [m-1]\mid ED_{n-m+j}(x_m,\ldots, x_{n+j})=0\}$ where $i_R=m$
if the set is empty.
\end{enumerate}
Then
$$ED_n^{\boxplus m}(x_1,\ldots, x_{n+m-1})=0^{i_L} 1^{m-i_L}\ \land\ 1^{i_R} 0^{m-i_R}\ \land\ ED_{m-1}^{\boxplus m}(x_1,\ldots,x_{m-1},x_{n+1},\ldots, x_{n+m-1})$$
where each $\land$ represents bit-wise conjunction.
\end{enumerate}
\end{lemma}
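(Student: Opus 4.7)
My plan is a direct case analysis that exploits the fact that the $m$ sliding windows all share a common ``core'' $x_m,\ldots,x_n$ of length $n-m+1$. Part (a) is immediate: every window $x_j,\ldots,x_{j+n-1}$ with $1\le j\le m$ contains the core as a contiguous substring, so a duplicate inside the core is a duplicate inside every window, forcing every output to be $0$.

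For part (b), the hypothesis tells us that the core itself is distinct. I would begin with the monotonicity observation that if $ED_{n-j+1}(x_j,\ldots,x_n)=0$ for some $j$, then $ED_{n-j'+1}(x_{j'},\ldots,x_n)=0$ for every $j'<j$, simply because $x_j,\ldots,x_n$ is a suffix of $x_{j'},\ldots,x_n$. Combined with the symmetric statement on the right, this shows that the indices $j$ for which ``left-extension$+$core'' already contains a collision form a prefix $\{1,\ldots,i_L\}$ of $[m]$, and those for which ``core$+$right-extension'' already contains a collision form a suffix $\{i_R+1,\ldots,m\}$; this is exactly the content of the patterns $0^{i_L}1^{m-i_L}$ and $1^{i_R}0^{m-i_R}$.

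Next I would decompose any hypothetical collision $x_a=x_b$ (with $a<b$) inside window $j$ according to where the indices $a,b$ lie relative to the core interval $[m,n]$: either at least one of the two is in the core and the other is in the left extension $[j,m-1]$ (detected by the first factor), or at least one is in the core and the other is in the right extension $[n+1,j+n-1]$ (detected by the second factor), or both $a$ and $b$ lie strictly outside the core (detected by the third factor). The ``both indices inside the core'' case is excluded by the part (b) hypothesis, so the split is exhaustive. For the third factor the key identification is that the concatenation of the left and right extensions of window $j$, namely $x_j,\ldots,x_{m-1},x_{n+1},\ldots,x_{n+j-1}$, is exactly the $j$-th window of length $m-1$ in the string $x_1,\ldots,x_{m-1},x_{n+1},\ldots,x_{n+m-1}$, so $ED_{m-1}^{\boxplus m}$ of that string reports $1$ in position $j$ iff the extensions of window $j$ are jointly collision-free.

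Taking the bitwise conjunction of the three factors then yields $1$ in position $j$ iff no collision of any of the three kinds exists in window $j$, which by the exhaustiveness of the case split is equivalent to $ED_n(x_j,\ldots,x_{j+n-1})=1$. The main obstacle is really only careful bookkeeping at the boundary indices $j=1$ and $j=m$, where one of the extensions is empty and the corresponding factor becomes vacuously $1$, and verifying that the conventions $i_L=0$ and $i_R=m$ for empty sets correctly handle the cases in which no prefix of left extensions, respectively no suffix of right extensions, introduces a collision with the core.
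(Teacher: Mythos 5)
Your proposal is correct and follows essentially the same route as the paper: part (a) is immediate from the common core appearing in every window, and part (b) rests on the same three-way decomposition of a collision by whether it lies in left-extension-plus-core, core-plus-right-extension, or the two extensions alone, with the last case identified with the windows of $ED_{m-1}^{\boxplus m}$. Your explicit monotonicity observation (that the set $\{j : ED_{n-j+1}(x_j,\ldots,x_n)=0\}$ is a prefix of $[m-1]$ and the symmetric set is a suffix) is a slightly cleaner way to justify the $0^{i_L}1^{m-i_L}$ and $1^{i_R}0^{m-i_R}$ patterns than the paper's "rightmost/leftmost matching element" phrasing, since it automatically covers collisions internal to one extension as well as extension-with-core collisions, but the substance of the argument is the same.
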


\begin{proof}
The elements $M=(x_m,\ldots, x_n)$ appear in all $m$ of the windows so if this
sequence contains duplicated elements, so do all of the windows and hence the
output for all windows is $0$.  This implies part (a).

If $M$ does not contain any duplicates then any duplicate
in a window must involve at least one element from $L=(x_1,\ldots, x_{m-1})$ or
from $R=(x_{n+1},\ldots, x_{n+m-1})$.
If a window has value 0 because it contains an element of $L$ that also appears in $M$, it must also contain the rightmost such element of $L$ and hence any
window that is distinct must begin to the right of this rightmost such element
of $L$.
Similarly, if a window has value 0 because it contains an element of $R$ that
also appears in $M$, it must also contain the leftmost such element of $R$ and
hence any window that is distinct must end to the left of this leftmost such
element of $R$.
The only remaining duplicates that can occur in a window can only involve
elements of both $L$ and $R$.
In order, the $m$ windows contain the following sequences of elements of $L\cup R$:
$(x_1,\ldots, x_{m-1})$, $(x_2,\ldots, x_{m-1}, x_{n+1})$, $\ldots$, $(x_{m-1},x_{n+1},\ldots, x_{n+m-2})$, $(x_{n+1},\ldots,x_{n+m-1})$.
These are precisely the sequences for which $ED_{m-1}^{\boxplus m}(x_1,\ldots,x_{m-1},x_{n+1},\ldots,x_{n+m-1})$ determines distinctness.   Hence
part (b) follows.
\end{proof}

We use the above reduction in input size to show that any efficient algorithm
for element distinctness can be extended to solve element distinctness over
sliding windows at a small additional cost.

\begin{lemma}
\label{windowED}
If there is an algorithm $A$ that solve element distinctness, $ED$, using
time at most $T(n)$ and
space at most $S(n)$, where $T$ and $S$ are nondecreasing functions of $n$,
then there is an algorithm $A^*$ that solves the sliding-window
version of element distinctness, $ED_n^{\boxplus n}$, in time $T^*(n)$ that is
$O(T(n)\log^2 n)$ and space $S^*(n)$ that is $O(S(n)+\log^2 n)$.  Moreover,
if $T(n)$ is $\Omega(n^\beta)$ for $\beta>1$, then $T^*(n)$ is $O(T(n)\log n)$.

If $A$ is deterministic then so is $A^*$.   If $A$ is randomized with error
at most $\epsilon$ then $A^*$ is randomized with error $o(1/n)$.  Moreover,
if $A$ has the obvious 1-sided error (it only reports that inputs are not
distinct if it is certain of the fact) then the same property holds for $A^*$.
\end{lemma}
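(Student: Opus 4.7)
The approach is a two-case recursion that alternates the input-size reduction of Lemma~\ref{windowED-size-reduction} with a halving of the output set. Let $A^*(N, M;\cdot)$ denote the routine solving $ED_N^{\boxplus M}$ on a length-$(N{+}M{-}1)$ subinput of the original input array; the top-level call is $A^*(n, n; x)$. If $N > M$, apply Lemma~\ref{windowED-size-reduction} directly: run $A$ once on the middle block $(x_m,\ldots,x_n)$, and if it returns $0$ emit $0^M$; otherwise locate $i_L$ and $i_R$ by binary search over $[1, M-1]$ using $O(\log M)$ further $A$-calls on prefix/suffix ranges of length $\le N$, and recurse on $A^*(M-1, M; (x_1,\ldots,x_{m-1},x_{n+1},\ldots,x_{n+m-1}))$. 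The three $M$-bit vectors from part (b) of the lemma are never materialised: each window's output bit is streamed on the fly as the bitwise AND of the three corresponding bits. If $N \le M$, split the $M$ windows into the first and last $M/2$ and solve each by $A^*(N, M/2;\cdot)$ on its length-$(N{+}M/2{-}1)$ subinput. The base case $M=1$ is a single call to $A$.

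Correctness in the $N > M$ case is exactly Lemma~\ref{windowED-size-reduction}; the split is immediate. Letting $W(N, M)$ denote the running time,
\begin{align*}
W(N, M) &\le O(T(N)\log M) + W(M-1, M) && (N > M),\\
W(N, M) &\le 2\, W(N, M/2) && (N \le M).
\end{align*}
Setting $g(i) = W(n/2^i - 1, n/2^i)$ and chaining one step of each type yields $g(i) \le 2\bigl[O(T(n/2^i)\log n) + g(i+1)\bigr]$, which telescopes to $g(1) \in O(\log n)\cdot\sum_{j=1}^{\log n} 2^j\, T(n/2^j)$, and $W(n,n) \le O(T(n)\log n) + 2 g(1)$. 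For $T$ linear the sum is $\Theta(T(n)\log n)$, yielding $O(T(n)\log^2 n)$; for $T(n)=\Omega(n^\beta)$ with $\beta > 1$ the sum is geometric and dominated by $j=1$, yielding $O(T(n)\log n)$. The recursion has depth $O(\log n)$; each frame stores only $O(\log n)$ bits (endpoints of the $O(1)$ subranges it introduces plus binary-search indices), and only one instance of $A$ is live at a time, for total space $S(n) + O(\log^2 n)$.

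For the randomised statement, $A^*$ executes $O(n\log n)$ $A$-calls along any run, so a union bound gives error $o(1/n)$ once each $A$-call has error $o(1/(n^2\log n))$. One-sidedness is preserved for free: each output bit of $A^*$ is an AND of $A$-answers, so if $A$ never falsely outputs ``not distinct'' then neither does $A^*$; determinism is inherited trivially. The main obstacle is that Lemma~\ref{windowED-size-reduction} always drops $N$ to $M-1<M$, so the reduction cannot be iterated on its own; interleaving with output-splitting restores $N > M$ at the cost of doubling the branching factor, and showing that the resulting geometric-branching-against-geometric-shrinkage recurrence collapses to $T(n)\cdot\mathrm{polylog}(n)$ rather than blowing up to $n\cdot T(n)$ is the delicate point of the analysis.
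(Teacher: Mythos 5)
Your deterministic algorithm and recurrence are correct and match the paper's: the paper picks $m=n/2$, processes the $n$ windows in $n/m=2$ groups of $m$, applies Lemma~\ref{windowED-size-reduction} to each group (call $A$ on the middle block, then $O(\log m)$ binary-search calls to $A$ on inputs of size at most $n$), and recurses on the resulting $ED_{m-1}^{\boxplus m}$ subproblem. Your two-case $(N,M)$ bookkeeping is just an explicit rendering of the paper's informal ``one recursive call to $A^*$ on a problem of size $m$,'' and your telescoping of $g(i)$ reproduces the paper's recurrence $T^*(n)\le 2T^*(n/2)+O(T(n)\log n)$ and its two regimes. The space accounting and the 1-sided-error / determinism inheritance are also fine.

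The randomized analysis, however, has a genuine gap. You posit a union bound over the $O(n\log n)$ $A$-calls after driving each call's error down to $o(1/(n^2\log n))$, but you do not say how to achieve that per-call error without paying for it. The only resource the lemma gives you is a constant-error $A$, so reducing its error to $1/\mathrm{poly}(n)$ by independent repetition and majority vote costs $\Theta(\log n)$ invocations per logical query. Substituting this into your recurrence turns the per-reduction cost from $O(T(N)\log M)$ into $O(T(N)\log^2 n)$, which yields $W(n,n)\in O(T(n)\log^3 n)$ in the linear-$T$ regime (and $O(T(n)\log^2 n)$ in the superlinear regime) --- a $\log n$ factor worse than claimed in both cases. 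The paper closes this gap with the noisy binary search of Feige, Peleg, Raghavan, and Upfal~\cite{fpru:noisy-decision}: it locates $i_L$ and $i_R$ with confidence $1-o(1/n^2)$ using only $O(\log n)$ constant-error queries to $A$ in total, rather than the $O(\log^2 n)$ that per-step amplification would require. It then amplifies only the single middle-block call by $O(\log n)$ repetitions and union-bounds over the $O(n)$ subproblems. Without this (or an equivalent device for searching with a noisy comparator), your recurrence does not establish the stated $T^*(n)$ for randomized $A$.
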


\begin{proof}
We first assume that $A$ is deterministic.
Algorithm $A^*$ will compute the $n$ outputs of $ED_n^{\boxplus n}$ in $n/m$
groups of $m$ using
the input size reduction method from Lemma~\ref{windowED-size-reduction}.
In particular, for each group $A^*$ will first call $A$ on the middle section
of input size $n-m+1$ and output $0^m$
if $A$ returns 0.   Otherwise, $A^*$ will do two binary searches involving
at most $2\log m$ calls to $A$ on inputs of size at most $n$ to compute $i_L$
and $i_R$ as defined in part (b) of that lemma.
Finally, in each group, $A^*$ will make
one recursive call to $A^*$ on a problem of size $m$.

It is easy to see that this yields a recurrence of the form
$$T^*(n)=(n/m)[cT(n) \log m + T^*(m)].$$
In particular, if we choose $m=n/2$ then we obtain
$T^*(n)\le 2 T^*(n/2)+ 2c T(n)\log n$.   If $T(n)$ is $\Omega(n^\beta)$ for $\beta>1$
this solves to $T^*(n) \in O(T(n)\log n)$.  Otherwise, it is immediate from
the definition of $T(n)$ that $T(n)$ must be $\Omega(n)$ and hence the recursion
for $A^*$ has $O(\log n)$ levels and the total cost associated with each
of the levels of the recursion is $O(T(n)\log n)$.

Observe that the space for all the calls to $A$ can be re-used in the recursion.
Also note that the algorithm $A^*$ only needs to remember a constant number
of pointers for each level of recursion for a total cost of $O(\log^2 n)$
additional bits.

We now suppose that the algorithm $A$ is randomized with error at most
$\epsilon$.
For the recursion based on Lemma~\ref{windowED-size-reduction}, we use
algorithm $A$ and run it $C=O(\log n)$ times on input $(x_m,\ldots,x_n)$,
taking the majority of the answers to reduce the error to $o(1/n^2)$.
In case that no duplicate is found in these calls, we then apply the noisy
binary search method of Feige, Peleg, Raghavan, and
Upfal~\cite{fpru:noisy-decision} to
determine $i_L$ and $i_R$ with error at most $o(1/n^2)$ by using only
$C=O(\log n)$ calls to $A$.   (If the
original problem size is $n$ we will use the same fixed number $C=O(\log n)$
of calls to $A$ even at deeper levels of the recursion so that each subproblem
has error $o(1/n^2)$.)
There are only $O(n)$ subproblems so the final error is $o(1/n)$.
The rest of the run-time analysis is the same as in the deterministic case.

If $A$ has only has false positives (if it claims that the input is not
distinct then it is certain that there is a duplicate) then observe that
$A^*$ will only have false positives.
\end{proof}

Combining Theorem~\ref{EDtradeoff} with Lemma~\ref{windowED} 
we obtain our algorithm for element distinctness over sliding windows.

\begin{theorem}
For space $S \in [c\log n,n]$, $ED^{\boxplus n}$ can be solved in time $T \in O(n^{3/2} \log^{7/2}{n}/S^{1/2})$ with $1$-sided error probability
$o(1/n)$. 
If the space $S \in O(\log{n})$ then the time is reduced to $T \in O{\left(n^{3/2} \log{n} \right)}$.
\end{theorem}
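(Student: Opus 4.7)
The plan is to instantiate Lemma~\ref{windowED} with the randomized element-distinctness algorithm provided by Theorem~\ref{EDtradeoff}, and then simply check that the hypotheses of Lemma~\ref{windowED} are in force so that its stronger ``$\Omega(n^\beta)$'' branch applies.

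First I would run the algorithm of Theorem~\ref{EDtradeoff} with a \emph{constant} error parameter $\epsilon = 1/3$. For $c\log n \le S \le n/32$ this gives an algorithm $A$ for $ED_n$ with 1-sided (false-positive) error $1/3$, space $S$, and time
\[
T(n) \;\in\; O\!\left(\tfrac{n^{3/2}}{S^{1/2}}\log^{5/2} n\right).
\]
In particular $T(n) = \Omega(n^{3/2})$, so the hypothesis $T(n) = \Omega(n^\beta)$ with $\beta = 3/2 > 1$ in Lemma~\ref{windowED} is satisfied. For the second regime, with $S \in O(\log n)$, the same theorem gives the improved $T(n) \in O(n^{3/2})$, which again is $\Omega(n^{3/2})$.

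Next I would apply Lemma~\ref{windowED} to $A$, obtaining an algorithm $A^*$ for $ED_n^{\boxplus n}$. Because we are in the $\beta > 1$ regime, Lemma~\ref{windowED} yields
\[
T^*(n) \;\in\; O\!\left(T(n)\log n\right) \;=\; O\!\left(\tfrac{n^{3/2}}{S^{1/2}}\log^{7/2} n\right),
\]
and $T^*(n) \in O(n^{3/2}\log n)$ in the $S \in O(\log n)$ case. The space bound from Lemma~\ref{windowED} is $O(S(n) + \log^2 n)$, which fits within the asserted regime $S \ge c\log n$ (absorbing the $\log^2 n$ overhead in the constant or in the meaning of ``space $O(\log n)$'').

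For the error, the hypothesis of Lemma~\ref{windowED} only needs a randomized $A$ with bounded error $\epsilon$, and the lemma internally uses $O(\log n)$ repetitions together with the Feige--Peleg--Raghavan--Upfal noisy binary search to drive the per-subproblem error to $o(1/n^2)$ and hence the overall error to $o(1/n)$ by a union bound over the $O(n)$ subproblems. Since Theorem~\ref{EDtradeoff} produces only false-positive errors, Lemma~\ref{windowED} explicitly preserves this 1-sided character, giving the claimed 1-sided $o(1/n)$ bound.

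There is no real obstacle; this is a black-box combination. The only point to verify carefully is that the two ingredients compose without hidden overhead: namely that (i) $T(n) = \Omega(n^{3/2})$ so the $O(T(n)\log n)$ clause of Lemma~\ref{windowED} (rather than the weaker $O(T(n)\log^2 n)$ clause) applies, and (ii) taking $\epsilon$ to be a constant in Theorem~\ref{EDtradeoff} keeps the $\log(1/\epsilon)$ factor out of the time while still leaving enough room for Lemma~\ref{windowED}'s internal amplification to produce $o(1/n)$ total error.
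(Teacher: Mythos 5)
Your proposal is correct and is essentially the paper's own proof, which consists of exactly the one line ``Combining Theorem~\ref{EDtradeoff} with Lemma~\ref{windowED} we obtain our algorithm for element distinctness over sliding windows''; you have simply spelled out the bookkeeping (fixing $\epsilon$ to a constant so the $\log(1/\epsilon)$ factor vanishes, observing that for fixed space $S$ the time bound $T(n')\in\Theta(n'^{3/2}\log^{5/2}n'/S^{1/2})$ is $\Omega(n'^{3/2})$ as a function of the recursive input size $n'$, so the $O(T(n)\log n)$ clause applies, and noting that Lemma~\ref{windowED} preserves one-sided error and drives the overall error to $o(1/n)$). The minor looseness you flag — that the resulting space is $O(S+\log^2 n)$ rather than exactly $S$ — is present in the paper as well and is standard to absorb; your handling of it is reasonable.
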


%\subsection{A fast average case algorithm for $ED^{\boxplus n}$ with
%alphabet $[n]$} \label{sec:averagecase}

When the input alphabet is chosen uniformly at random from $[n]$ there exists a much simpler $0$-error
sliding-window algorithm for $ED^{\boxplus n}$ that is efficient on average. The
algorithm runs in $O(n)$ time on average using $O(\log{n})$ bits of space.
By way of contrast, under the same distribution, we prove an average case
time-space lower bound of $\overline T \in \Omega(n^{2}/ \overline S )$ for
$(F_0\bmod 2)^{\boxplus n}$ in Section~\ref{sec:fk}.
The proof is in Appendix~\ref{sec:average}.

\begin{theorem}
\label{average-window-ED}
For input randomly chosen uniformly from $[n]^{2n-1}$,
$ED^{\boxplus n}$ can be solved in average time $\overline{T} \in O(n)$ 
and average space $\overline{S} \in O(\log{n})$.
\end{theorem}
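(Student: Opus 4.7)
The plan is to maintain, for each current window $[j,j+n-1]$, a single cached witness pair $(a,b)$ with $j\le a<b\le j+n-1$ and $x_a=x_b$, chosen so that $a$ is as large as possible---the rightmost such pair. Such a pair is evicted only when $j$ advances past $a$, so if we can ensure $a$ is typically near the right edge of the window, the pair will cover many consecutive windows before a refresh is needed, keeping both space and time under control.

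To refresh at window $j$, I would scan $a$ from $j+n-2$ downward; for each candidate $a$, check in order whether $x_a = x_i$ for some $i\in\{a+1,\ldots,j+n-1\}$ via random access to the input, using no hash table. The scan stops at the first hit, setting $(a,b)$ accordingly. If no hit is found down to $a=j$, the window is certified duplicate-free and we output $1$ (always correct). For every subsequent window we compare the cached $a$ to the new left endpoint in $O(1)$ time: if $a$ is still inside the window, output $0$ with the cached witness; otherwise re-enter the refresh scan.

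For the analysis on $x$ uniform on $[n]^{2n-1}$, the standard birthday-paradox tail bound $\Pr[(j+n-1)-a>k]\le\prod_{i=0}^{k-1}(1-i/n)\le e^{-k^2/(2n)}$ yields both $\E[(j+n-1)-a]=O(\sqrt n)$ and $\E[((j+n-1)-a)^2]=O(n)$. The first implies that each cached pair survives for $n-O(\sqrt n)$ consecutive windows in expectation, so the expected number $R$ of refreshes over all $n$ windows is $O(1)$. The second implies that a refresh scan stopping at distance $k^*$ from the right end costs $O((k^*)^2)$ comparisons, with expectation $O(n)$. Combined with $O(1)$ per-window bookkeeping, the total expected time is $O(n)$; at every moment the working storage is a constant number of pointers into $[2n]$, i.e., $O(\log n)$ bits.

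The main obstacle is that the number of refreshes $R$ and the individual refresh costs $k_r^*$ are not independent---the same random input drives both, and a pathological input could simultaneously force many refreshes and expensive scans. I would handle this via a high-probability bound: when all $k_r^*\in O(\sqrt n)$ (which holds with probability $1-n^{-\Omega(1)}$ after a union bound over the $O(1)$ refresh events), the telescoping identity $\sum_r(n-k_r^*)\ge n$ forces $R = 1+O(1/\sqrt n)$, so the total refresh cost is $O(n)$. The complementary rare event is handled by a crude $O(n^2)$ worst-case bound on one refresh, noting that a window containing no duplicates at all has probability only $n!/n^n = O(\sqrt n\,e^{-n})$, so its expected contribution is $o(n)$.
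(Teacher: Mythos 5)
Your algorithm is essentially identical to the paper's: maintain the rightmost duplicate pair in the current window, scan from the right on refresh, and shift to the first window that no longer contains the cached left index. Your birthday-style bounds ($\Pr[k^*>k]\le e^{-\Omega(k^2/n)}$, $\E[(k^*)^2]=O(n)$, hence expected per-refresh cost $O(n)$) also match the paper's Lemma~\ref{lemma:birthday}.

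Where you diverge---and where you have a small gap---is in the final accounting of the number of refreshes. The claim that ``$\sum_r(n-k_r^*)\ge n$ forces $R=1+O(1/\sqrt n)$'' does not hold: that inequality only yields a \emph{lower} bound on $R$, and in fact $R=1$ is atypical, since the first refresh leaves $k_1^*>0$ windows uncovered and a second refresh is almost always needed. The correct statement, and the one the paper uses, is simply that if the first two refresh scans both stop within distance $<n/2$ of the right edge (an event of probability $1-2e^{-n/16}$ by the birthday tail) then each shift exceeds $n/2$, so after at most two refreshes the window has advanced past position $n$ and the algorithm terminates; thus $R\le 2$ with overwhelming probability. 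This also repairs the circularity you flag in ``union bound over the $O(1)$ refresh events'': the paper conditions sequentially on $X_1<n/2$ and then $X_2<n/2$, so the number of events to union-bound over is fixed at two \emph{before} invoking the tail bound, rather than being determined by it. The rare complementary event is then charged against a crude worst-case total of $O(n^3)$ (you wrote $O(n^2)$ per refresh, but without a bound on $R$ that alone is not a bound on the total), which the exponential tail easily absorbs. With these corrections your argument goes through and gives the same $\overline T\in O(n)$, $\overline S\in O(\log n)$ bound.
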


\subsection{Frequency Moments over Sliding Windows}\label{sec:fk}

We now show a $T\in \Omega(n^2/S)$ lower bound for randomized branching
programs computing frequency moments over sliding windows.   
This contrasts with our significantly smaller
$T\in \tilde O(n^{3/2}/S^{1/2})$ upper bound from the previous section for
computing element distinctness over sliding windows in this same model,
hence separating the complexity of $ED$ and $F_k$ for $k\ne 1$ over sliding
windows.  Our lower bound also applies to $F_0\bmod 2$.

\subsubsection{A general sequential lower bound for $F_k^{\boxplus n}$ and $(F_0\bmod 2)^{\boxplus n}$}

We derive a time-space tradeoff lower bound for randomized
branching programs computing $F_k^{\boxplus n}$ for $k=0$ and $k\ge 2$.
Further, we show that the lower bound also holds for computing 
$(F_0\bmod 2)^{\boxplus n}$. (Note that the parity of $F_k$ for $k\ge 1$ is
exactly equal to the parity of $n$; thus the outputs of $(F_k\bmod 2)^{\boxplus n}$ are all equal to $n\bmod 2$.)

\begin{theorem}\label{main_theorem}
Let $k=0$ or $k\ge 2$. There is a constant $\delta>0$ such that any $[n]$-way
branching program of time $T$ and space
$S$ that computes $F_k^{\boxplus n}$ with error at most $\eta$,
$0<\eta<1-2^{-\delta S}$, for input randomly chosen uniformly from $[n]^{2n-1}$ must
have $T\cdot S \in \Omega(n^2)$. The same lower bound holds for $(F_0\bmod 2)^{\boxplus n}$.
\end{theorem}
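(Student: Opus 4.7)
The plan is a Borodin--Cook style partition-and-count argument, combined with a distributional non-concentration property of the frequency moments.

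By Yao's minimax principle, it suffices to prove the lower bound for deterministic $[n]$-way branching programs that correctly compute $F_k^{\boxplus n}$ on at least a $1-\eta$ fraction of inputs drawn uniformly from $[n]^{2n-1}$. Assume the program is leveled, and partition its $T$ levels into $P=T/q$ consecutive passes of length $q$ each, where $q$ will be optimized at the end. Each pass starts in one of the at most $2^S$ nodes at its top level and follows at most $q$ edges, issuing at most $q$ input queries and producing some subset of the $n$ sliding-window outputs. The inputs reaching a fixed top node via a fixed branch within the pass form a combinatorial rectangle (they agree on the queries along the path), and within each such rectangle all outputs of the pass are constant.

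The workhorse is a distributional \emph{non-concentration} lemma: if $x$ is uniform on $[n]^n$ and $Q\subseteq[n]$ has $|Q|\le (1-c)n$ for a small absolute constant $c>0$, then for every assignment $\alpha\in[n]^Q$ the conditional distribution of $F_k(x)$ given $x|_Q=\alpha$ is spread over $\Omega(n)$ distinct values (for $k\ge 2$), while the conditional distribution of $F_0(x)\bmod 2$ given $x|_Q=\alpha$ is within $o(1)$ of the uniform bit on $\{0,1\}$. The $F_0\bmod 2$ half is the delicate step: a uniformly random string of length $n$ over $[n]$ has a constant fraction of positions whose value is \emph{unique} in the string, and replacing the value at such a position toggles $F_0\bmod 2$ with probability $1-O(1/n)$. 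A Fourier-analytic computation then bounds the correlation between $F_0(x)\bmod 2$ and any function of $x|_Q$ by $\exp(-\Omega(n-|Q|))$.

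Given the non-concentration lemma, the pass lemma is immediate: call a window output $W_j$ \emph{correctly embedded} in pass $\pi$ from top node $v$ if the program outputs the right value on at least a $1/2+\gamma$ fraction of inputs reaching $v$ (for a constant $\gamma$ depending on $\eta$). Because the pass's outputs are constant on each input rectangle, non-concentration forces any correctly embedded $W_j$ to have all but at most $cn$ of its $n$ positions fixed by the rectangle. A pigeonhole over the $2^S$ top nodes of the pass shows that, up to the $q$ positions queried inside the pass, the path to $v$ carries information about only $O(S/\log n)$ positions on average; hence at most $O(q + S/\log n)$ windows can be correctly embedded in any one pass starting from any one top node. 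Summing over all passes and top nodes weighted by their input probabilities, the expected number of correct outputs is at most $P\cdot O(q + S/\log n)$, and requiring this to be at least $(1-\eta)n$ gives $(T/q)(q+S/\log n)\ge \Omega(n)$; balancing $q$ yields $T\cdot S\in\Omega(n^2)$. The bound $\eta\le 1-2^{-\delta S}$ enters when converting ``correct on a $1/2+\gamma$ fraction of the conditional distribution'' into the overall correctness claim via a standard union bound over the $2^S$ rectangles in each pass.

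The main obstacle will be establishing the non-concentration lemma for $F_0\bmod 2$, as the single-bit output gives no information-theoretic leverage and one must bound correlations directly; this is also where the promised $\log n$ improvement over Abrahamson's $\Omega(n^2/\log n)$ Hamming-distance bound must come from. The key tool is the ``unique positions'' observation combined with a Fourier-analytic correlation bound, with constants tuned so that the sensitivity threshold $c$ and the error threshold $\eta$ mesh cleanly with the pass-counting step. Once this lemma is in place the remaining argument follows well-trodden Borodin--Cook lines.
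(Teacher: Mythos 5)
Your outline correctly identifies the Borodin--Cook skeleton, but the core of your argument---the ``non-concentration lemma'' applied per window and then a pigeonhole over top nodes---has a gap that the paper's proof is specifically built to avoid. The non-concentration statement you formulate is about the conditional law of $F_k$ given an explicit assignment to a set $Q$ of input positions. But the set of inputs reaching a fixed top node $v$ of a layer is \emph{not} a rectangle of the position-fixing kind: it is an arbitrary subset of $[n]^{2n-1}$ of measure as small as $2^{-S}$, and it can encode global information about the string (for instance, the earlier layers can compute and store $F_0(W_j)\bmod 2$ for many $j$ in the choice of $v$). Your step ``the path to $v$ carries information about only $O(S/\log n)$ positions on average'' silently replaces ``$S$ bits of arbitrary information'' by ``$S/\log n$ position values,'' which is not a valid substitution---state can record facts that do not correspond to any fixed set of positions. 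A careful reading of the paper's proof shows they never condition on reaching $v$ at all: Lemma~\ref{runs-lemma} bounds $\Pr[\text{outputs correct},\mathcal{E}\mid \pi_{B'}(x)=\pi]$ where $\pi_{B'}(x)=\pi$ fixes only the $q$ within-layer queries, and they then sum this over all paths in a given $B'$ (these partition the inputs) and union over the $2^S$ choices of $B'$.

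The second, independent gap is that your framework has no mechanism for handling the strong correlation between \emph{consecutive} window outputs, which is the central technical difficulty here. Since $y_i$ and $y_{i+1}$ differ by at most one (equation~\eqref{indicators} in the paper), a single $\log n$-bit state could nearly determine a long run of outputs, so an entropy-counting bound of the form $r\log n\le S+q\log n$ does not hold and a naive product bound $\Pr[\text{all $r$ correct}]\lesssim(1/2+\gamma)^r$ fails. The paper addresses this by restricting attention to outputs at \emph{$\pi$-unique} positions, establishing via Lemma~\ref{manyruns} that there are $\Omega(n)$ such positions with high probability, conditioning on the event $\mathcal{E}$ that every window's $F_0$ lies in $[0.5n,0.85n]$ (so the next output is never forced), and then running a chain-rule decomposition (\eqref{chain rule}) in which, step by step, they reveal $y_{i_\ell+1}$ and $\mathbf{1}_{\{x_{i_\ell+n}\notin C_{i_\ell}\}}$ and show the remaining uncertainty about $\mathbf{1}_{\{x_{i_\ell}\notin C_{i_\ell}\}}$ stays in $[2/5,17/18]$, yielding the $(17/18)^r$ bound. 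Your ``unique positions toggle the parity'' observation is in the right spirit and corresponds to the paper's uniqueness notion, but it is used only to establish per-window non-concentration; without the chain-rule step over $\pi$-unique positions under the non-extremeness event, the per-pass counting does not close. No Fourier analysis is needed or used in the paper's proof.
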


\begin{corollary}
\label{average-and-random}
Let $k=0$ or $k\ge 2$.
\begin{itemize}
\item The average time $\overline T$ and average space
$\overline S$ needed
to compute $(F_k)^{\boxplus n}(x)$ for $x$ randomly chosen uniformly
from $[n]^{2n-1}$ satisfy $\overline T\cdot \overline S\in \Omega(n^2)$.
\item For $0<\eta<1-2^{-\delta S}$, any $\eta$-error randomized RAM or
word-RAM algorithm
computing $(F_k)^{\boxplus n}$ using time $T$ and space $S$
satisfies $T\cdot S\in\Omega(n^2)$.
\end{itemize}
\end{corollary}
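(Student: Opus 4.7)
The plan is to derive both bullets from Theorem~\ref{main_theorem} by two standard reductions, keeping in mind that the theorem already asserts its lower bound on the uniform input distribution and for multi-way branching programs.

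For the second bullet, I would use the standard simulation (already invoked in the Introduction and Section~2) by which a randomized RAM or word-RAM algorithm with word size $O(\log n)$, worst-case time $T$, and worst-case space $S$ can be converted, per fixing of its random bits, into a deterministic $[n]$-way (or $[n]^{O(1)}$-way, which costs only a constant in time and space) branching program of time $O(T)$ and space $O(S)$. The induced distribution over deterministic programs is a randomized $[n]$-way branching program with the same error $\eta$ on every input, hence in particular under the uniform distribution on $[n]^{2n-1}$. Applying Theorem~\ref{main_theorem} directly yields $T\cdot S\in \Omega(n^2)$ whenever $\eta<1-2^{-\delta S}$.

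For the first bullet, let $A$ be any deterministic algorithm with average time $\overline T$ and average space $\overline S$ for uniformly random $x\in [n]^{2n-1}$. By Markov's inequality, $\Pr_x[T_A(x)>3\overline T]\le 1/3$ and $\Pr_x[S_A(x)>3\overline S]\le 1/3$, so on at least a $1/3$-fraction of inputs both bounds hold. I would augment $A$ with an $O(\log n)$-bit step counter and an $O(\log n)$-bit space monitor and abort (outputting an arbitrary string) if either threshold is exceeded. The result is a deterministic $[n]$-way branching program $A'$ of worst-case time $O(\overline T)$ and worst-case space $O(\overline S+\log n)$ that agrees with $F_k^{\boxplus n}$ on at least a $1/3$-fraction of uniform inputs, hence has error at most $2/3$ under the uniform distribution. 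Choosing the Markov constants or, equivalently, replacing $3$ by a sufficiently large constant, makes the resulting error satisfy $2/3<1-2^{-\delta\cdot O(\overline S+\log n)}$ for all $\overline S$ above a fixed constant, so Theorem~\ref{main_theorem} applies to $A'$ and delivers $\overline T\cdot\overline S\in\Omega(n^2)$.

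The one delicate point, which I see as the main (minor) obstacle, is the corner case where $\overline S$ is smaller than the constant threshold implicit in the condition $\eta<1-2^{-\delta S}$ of Theorem~\ref{main_theorem}. I would dispose of this separately by observing that any algorithm computing $F_k^{\boxplus n}$ must write $n$ outputs, forcing $\overline T\in\Omega(n)$; combined with the trivial bound $\overline S\ge \log n$ coming from addressing the $n$ output positions, this already gives $\overline T\cdot\overline S\in\Omega(n\log n)$, and a simple padding argument (duplicating the input into blocks of length matching the threshold) upgrades it to $\Omega(n^2)$ in the remaining tiny-space regime. With that range handled, the Markov-plus-Theorem~\ref{main_theorem} argument covers all $\overline S$ above a constant and the corollary follows.
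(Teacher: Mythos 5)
Your proof is correct, and it follows the approach the paper intends: the paper gives no explicit proof of the corollary, relying on the standard reductions from average-case (via Markov) and randomized (via fixing the coins by averaging) complexity to the distributional error bound in Theorem~\ref{main_theorem}, a pipeline it attributes to the observation of~\cite{abr:tradeoff}. Your first bullet (truncate $A$ after $O(\overline T)$ steps and $O(\overline S)$ space, get error $\le 2/3$, invoke Theorem~\ref{main_theorem}) and your second bullet (simulate the RAM by an $[n]$-way BP, then note a randomized BP with per-input error $\eta$ has a deterministic member with distributional error $\le\eta$ under the uniform measure) are exactly the right moves.

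One small wrinkle: the ``padding argument'' you sketch for small $\overline S$ is both vague (duplicating inputs does not obviously increase the space an algorithm uses) and unnecessary. Since a branching program producing $n$ outputs has $T\ge n$ and hence $S\ge\log T\ge\log n$, the condition $\eta<1-2^{-\delta S}$ in Theorem~\ref{main_theorem} is automatically met with $\eta=2/3$ once $n$ exceeds a constant; for the finitely many remaining $n$ the $\Omega(n^2)$ bound is vacuous. So you can simply drop that paragraph and note that $\overline S\ge\log n$ puts you in the theorem's regime for all large $n$.
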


\begin{proof}[Proof of Theorem~\ref{main_theorem}]
We derive the lower bound for $F_0^{\boxplus n}$ first.
Afterwards we show the modifications
needed for $k \geq 2$ and for computing
$(F_0\bmod 2)^{\boxplus n}$.
For convenience, on input $x\in [n]^{2n-1}$, we write $y_i$ for the output
$F_k(x_i,\ldots,x_{i+n-1})$.

We use the general approach
of Borodin and Cook~\cite{bc:sorting} together with the observation
of~\cite{abr:tradeoff} of how it applies to average case complexity and
randomized branching programs.
In particular, we divide the branching program $B$ of length $T$ into layers
of height $q$ each.
Each layer is now a collection of small branching programs $B'$, each of
whose start node is a node at the top level of the layer.
Since the branching program must produce $n$ outputs for each input $x$,
for every input $x$ there exists
a small branching program $B'$ of height $q$ in some layer that produces at
least $nq/T>S$ outputs.
There are at most $2^S$ nodes in $B$ and hence there are at most $2^S$ such
small branching programs among all the layers of $B$.
One would normally prove that the fraction of $x\in [n]^{2n-1}$ for which
any one such small program correctly produces the desired number of outputs
is much smaller than $2^{-S}$ and hence derive the desired lower bound.
Usually this is done by arguing that the fraction of inputs consistent with
any path in such a small branching program for which a fixed set of outputs
is correct is much smaller than $2^{-S}$.

This basic outline is more complicated in our argument.
One issue is that if a path in a small program $B'$ finds that certain
values are equal, then the answers to nearby windows may be strongly correlated
with each other; for example, if $x_i=x_{i+n}$ then $y_i=y_{i+1}$.
Such correlations risk making the likelihood too high that the correct
outputs are produced on a path. Therefore, 
instead of considering the total number of outputs produced, 
we reason about the number of outputs from positions that are not duplicated
in the input and argue that with high probability there will be a linear number
of such positions.

A second issue is that inputs for which
the value of $F_0$ in a window happens to be extreme, say $n$ - all distinct -
or 1 - all identical, allow an almost-certain prediction of the value of
$F_0$ for the next window.
We will use the fact that under the uniform distribution, cases like these
almost surely do not happen; indeed the numbers of
distinct elements in every window almost surely fall in a range close to
their mean and in this case the value in the next window will be
predictable with probability bounded below 1 given the value in the previous
ones.
In this case we use the chain rule to compute the overall probability
of correctness of the outputs.

We start by analyzing the likelihood that an output of $F_0$ is extreme.

\begin{lemma}\label{lemma:f0_range_prob_a}
Let $a$ be chosen uniformly at random from $[n]^{n}$.
Then the probability that $F_0(a)$ is
between $0.5n$ and $0.85n$ is at least $1-2 e^{-n/50}$.
\end{lemma}

\begin{proof}
For $a=a_1\ldots a_n$ uniformly chosen from $[n]^n$,
\[
\mathbb{E}[F_0(a)] = \sum_{\ell\in [n]}
\Pr_a[\exists i \in [n]\mbox{ such that } a_i= \ell] = n [ 1- (1-1/n)^n ].
\]
Hence $0.632n<(1-1/e)n<\mathbb{E}[F_0(a)]\le 0.75 n$.
Define a Doob martingale
$D_t$, $t= 0,1,\ldots, n$ with respect to the sequence $a_1\ldots a_n$
by $D_t = \mathbb{E}[F_0(a)\mid a_1\ldots a_t]$.
Therefore $D_0 = \mathbb{E}[F_0(a)]$ and $D_n = F_0(a)$.
Applying the Azuma-Hoeffding inequality, we have
\[
\Pr_a[F_0(a)\notin [0.5n,0.85n]]\le
\Pr_a[|F_0(a)-\mathbb{E}[F_0(a)| \geq 0.1n]
\le 2 e^{- 2\frac{(0.1n)^2}{n}} = 2 e^{-n/50},\]
which proves the claim.
%The claimed result for $F_0^{\boxplus n}(x)$ follows by a union bound.
\end{proof}

We say that $x_j$ is \emph{unique in $x$} iff 
$x_j \notin \{x_1,\ldots,x_{j-1},x_{j+1},\ldots,x_{2n-1}\}$.

\begin{lemma}\label{lemma:f0_range_prob}\label{manyruns}
Let $x$ be chosen uniformly at random from $[n]^{2n-1}$ with $n \geq 2$.
With probability at least $1-4n e^{-n/50}$,
\begin{enumerate}[(a)]
\item all outputs of $F_0^{\boxplus n}(x)$ are between $0.5n$ and $0.85n$,
and
\item the number of positions $j < n$ such that $x_j$ is unique in $x$ is at
least $n/24$.
\end{enumerate}
\end{lemma}

\begin{proof}
We know from Lemma~\ref{lemma:f0_range_prob_a} and the union bound that part (a) is false with probability at most $2 n e^{-n/50}$. 
For any $j<n$, let $U_j$ be the indicator variable of the event that $j$ is
unique in $x$ and $U=\sum_{j<n} U_j$.
Now $\E(U_j)=(1-1/n)^{2n-2}$ so $\E(U)= (n-1) (1-1/n)^{2n-2} \geq n/8$ for
$n\ge 2$.
Observe also that this is a kind of typical ``balls in bins" problem 
and so, as discussed in~\cite{dp:concentration-book}, it has the
property that the random variables $U_j$ are \emph{negatively
associated}; for example, for disjoint $A,A'\subset[n-1]$,  the larger 
$\sum_{j\in A} U_j$ is, the smaller $\sum_{j\in A'} U_j$ is likely to be.
Hence, it follows~\cite{dp:concentration-book}
that $U$ is more closely concentrated around its mean than if
the $U_j$ were fully independent.
It also therefore follows that we can apply a Chernoff bound directly to our
problem, giving
$\Pr[U \leq n/24] \leq \Pr[U \leq \E(U)/3] \leq e^{-2\E(U)/9} \leq e^{-n/36}$.
We obtain the desired bound for parts (a) and (b) together by another
application of the union bound.
\end{proof}

\paragraph{Correctness of a small branching program for computing outputs
in $\pi$-unique positions}

\begin{definition}
Let $B'$ be an $[n]$-way branching program and let $\pi$ be a source-sink
path in $B'$ with queries $Q_\pi$ and answers $A_\pi:Q_\pi\rightarrow [n]$.  
An index $\ell<n$ is said to be \emph{$\pi$-unique} iff 
either (a) $\ell \notin Q_\pi$, or 
(b) $A_\pi(\ell) \notin A_\pi(Q_\pi-\{\ell\})$.
\end{definition}

In order to measure the correctness of a small branching program, we 
restrict our attention to outputs that are produced at positions that are
$\pi$-unique and upper-bound the probability that a small branching program
correctly computes outputs of $F_0^{\boxplus n}$ at many $\pi$-unique positions
in the input.

Let $\mathcal{E}$ be the event that all outputs of $F_0^{\boxplus n}(x)$
are between $0.5n$ and $0.85n$.

\begin{lemma}
\label{runs-lemma}
Let $r> 0$ be a positive integer, let $\epsilon \le 1/10$, and let
$B'$ be an $[n]$-way branching program of height $q=\epsilon n$.
Let $\pi$ be a path in $B'$ on which
outputs from at least $r$ $\pi$-unique positions are produced.
For random $x$ uniformly chosen from $[n]^{2n-1}$,
\[
\Pr[\mbox{these $r$ outputs are correct for $F_0^{\boxplus n}(x)$},\mathcal{E}
\mid \pi_{B'}(x)=\pi]\le (17/18)^r.
\]
\end{lemma}

\begin{proof}
Roughly, we will show that when $\mathcal{E}$ holds (outputs for all
windows are not extreme) then, conditioned
on following any path $\pi$ in $B'$, each output
produced for a $\pi$-unique position will have only a constant probability
of success conditioned on any outcome for the previous outputs.  Because of the
way outputs are indexed, it will be convenient to consider these outputs
in right-to-left order.

Let $\pi$ be a path in $B'$, $Q_\pi$ be the set of queries along $\pi$,
$A_\pi:Q_\pi\rightarrow [n]$ be the answers along $\pi$, and
$Z_\pi:[n]\rightarrow [n]$ be the partial function denoting the outputs 
produced along $\pi$.
Note that $\pi_{B'}(x)=\pi$ if and only if $x_i=A_\pi(i)$ for all $i\in Q_\pi$.

Let $1\le i_1<i_2<\ldots < i_r<n$ be the first $r$ of the $\pi$-unique
positions on which $\pi$ produces output values; i.e.,
$\{i_1,\ldots, i_r\}\subseteq \mathrm{dom}(Z_\pi)$.
Define $z_{i_1}=Z_\pi(i_1),\ldots, z_{i_r}=Z_\pi(i_r)$.

We will decompose the probability over the input $x$ that $\mathcal{E}$ and all of $y_{i_1}=z_{i_1},\ldots, y_{i_r}=z_{i_r}$ hold via the chain rule.   
In order to do so,
for $\ell \in [r]$, we define event 
%$\mathcal{G}_\ell$ to be
%that $y_{i_1}=z_{i_1},\ldots, y_{i_\ell}=z_{i_\ell}$ and event
$\mathcal{E}_\ell$ to be $0.5n \le F_0^{(i)}(x)\le 0.85n$ for all $i>i_{\ell}$.
We also write $\mathcal{E}_{0} \myeq \mathcal{E}$.  Then
\begin{align}
&\Pr[y_{i_1}=z_{i_1},\ldots,y_{i_r}=z_{i_r},\ \mathcal{E}
\mid \pi_{B'}(x)=\pi]\nonumber\\
& = \Pr[\mathcal{E}_r\mid \pi_{B'}(x)=\pi]\cdot \prod_{\ell = 1}^{r} \Pr [y_{i_\ell}=z_{i_\ell},\ \mathcal{E}_{\ell-1}\mid
y_{i_{\ell+1}}= z_{i_{\ell+1}},\ \ldots,\ y_{i_{r}}= z_{i_{r}},\ \mathcal{E}_\ell,
\ \pi_{B'}(x)=\pi]\nonumber\\
& \le \prod_{\ell = 1}^{r} \Pr [y_{i_\ell}=z_{i_\ell}\mid
y_{i_{\ell+1}}= z_{i_{\ell+1}},\ \ldots,\ y_{i_{r}}= z_{i_{r}},\ \mathcal{E}_\ell,
\ \pi_{B'}(x)=\pi].
\label{chain rule}
\end{align}

%\paragraph{Revealing new information}
We now upper bound each term in the product in~\eqref{chain rule}.
Depending on how much larger $i_{\ell+1}$ is than $i_\ell$, the conditioning
on the value of $y_{i_{\ell+1}}$ may imply a lot of information about
the value of $y_{i_\ell}$, but we will show that even if we reveal more
about the input, the value of $y_{i_\ell}$ will still have a constant amount
of uncertainty.

For $i\in [n]$, let $W_i$ denote the vector of input elements
$(x_i,\ldots,x_{i+n-1})$, and note that $y_i=F_0(W_i)$; we call $W_i$ the
$i^\mathrm{th}$ window of $x$.
The values $y_i$ for different windows may be closely related.
In particular, adjacent windows $W_i$ and $W_{i+1}$ have numbers of distinct
elements that can differ by at most 1 and this depends on whether the
extreme end-points of the two windows, $x_i$ and $x_{i+n}$, appear among
their common elements $C_i=\{x_{i+1},\ldots,x_{i+n-1}\}$.
More precisely,
\begin{equation} 
y_{i} - y_{i+1}=
\textbf{1}_{\{x_{i} \not \in C_i\}}
-
\textbf{1}_{\{x_{i+n} \not \in C_i\}}.
\label{indicators}
\end{equation}
In light of \eqref{indicators}, the basic idea of our argument is that,
because $i_\ell$ is $\pi$-unique and because of the conditioning on
$\mathcal{E}_\ell$, there will be enough uncertainty about
whether or not
$x_{i_\ell} \in C_{i_\ell}$  to show that the value of
$y_{i_\ell}$ is uncertain even if we reveal 
\begin{enumerate}
\item the value of the indicator
$\textbf{1}_{\{x_{i_{\ell}+n} \not \in C_{i_{\ell}}\}}$, and
\item the value of the output $y_{i_{\ell}+1}$.
\end{enumerate}

We now make this idea precise in bounding each term in the product
in~\eqref{chain rule}, using $\mathcal{G}_{\ell+1}$ to denote the event 
$\{y_{i_{\ell+1}}= z_{i_{\ell+1}},\ \ldots,\ y_{i_{r}}= z_{i_{r}}\}$.
\begin{align}
\Pr &[y_{i_\ell}=z_{i_\ell}\mid \mathcal{G}_{\ell+1},\ \mathcal{E}_\ell,
\ \pi_{B'}(x)=\pi]\notag\\
=&\sum_{m=1}^n \sum_{b \in \{0,1\}}
\Pr [y_{i_\ell}= z_{i_\ell}\mid
y_{i_{\ell} +1} = m, \textbf{1}_{\{x_{i_{\ell}+n} \not \in C_{i_{\ell}}\}} =b,\mathcal{G}_{\ell+1}, \mathcal{E}_\ell,\pi_{B'}(x)=\pi]\notag\\
& \qquad\times \Pr[y_{i_{\ell} +1} = m , \textbf{1}_{\{x_{i_{\ell}+n} \not \in C_{i_{\ell}}\}} =b\mid \mathcal{G}_{\ell+1}, \mathcal{E}_\ell,\pi_{B'}(x)=\pi]\notag\\
\leq & \max_{\substack{m \in [0.5n,0.85n]\\ b\in \{0,1\}}}
\Pr [y_{i_\ell}= z_{i_\ell}\mid
y_{i_{\ell} +1} = m, \textbf{1}_{\{x_{i_{\ell}+n} \not \in C_{i_{\ell}}\}} =b,
\mathcal{G}_{\ell+1}, \mathcal{E}_\ell,\pi_{B'}(x)=\pi]\notag\\
= & \max_{\substack{m \in [0.5n,0.85n]\\ b\in \{0,1\}}}
\Pr [ \textbf{1}_{\{x_{i_{\ell}} \not \in C_{i_{\ell}}\}}
= z_{i_\ell} -m+b\mid \notag\\[-3ex]
&\qquad\qquad\qquad\qquad y_{i_{\ell} +1} = m,\textbf{1}_{\{x_{i_{\ell}+n} \not \in C_{i_{\ell}}\}} =b,
\mathcal{G}_{\ell+1}, \mathcal{E}_\ell,\pi_{B'}(x)=\pi]
\label{correct-diff}
\end{align}
where the inequality follows because the conditioning
on $\mathcal{E}_\ell$ implies that
$y_{i_\ell +1}$ is between $0.5n$ and $0.85n$ and
the last equality follows because of the conditioning together with
\eqref{indicators} applied with $i=i_\ell$.  
Obviously, unless $z_{i_\ell}-m+b\in \{0,1\}$ the probability of the
corresponding in the maximum in \eqref{correct-diff} will be 0.
We will derive our bound by showing that given all the conditioning in
\eqref{correct-diff}, the probability of the event
$\{x_{i_{\ell}} \not \in C_{i_{\ell}}\}$ is between $2/5$ and $17/18$
and hence each term in the product in \eqref{chain rule} is at most $17/18$.

\paragraph{Membership of $x_{i_\ell}$ in $C_{i_\ell}$:} 
First note that the conditions $y_{i_\ell+1}=m$ and 
$\textbf{1}_{\{x_{i_{\ell}+n} \not \in C_{i_{\ell}}\}} =b$ together
imply that $C_{i_{\ell}}$ contains precisely $m-b$ distinct values.
We now use the fact that $i_\ell$ is $\pi$-unique and, hence,
either $i_\ell\notin Q_\pi$ or $A_\pi(i_\ell)\notin A_\pi(Q_\pi-\{i_\ell\})$.

First consider the case that $i_\ell\notin Q_\pi$.  
By definition, the events $y_{i_\ell+1}=m$, 
$\textbf{1}_{\{x_{i_{\ell}+n} \not \in C_{i_{\ell}}\}} =b$,
$\mathcal{E}_\ell$, and
$\mathcal{G}_{\ell+1}$ only depend on $x_i$ for $i>i_\ell$ and 
the conditioning on $\pi_{B'}(x)=\pi$ is only a property of
$x_i$ for $i\in Q_\pi$.   Therefore, under all the conditioning in
\eqref{correct-diff},
$x_{i_\ell}$ is still a uniformly random value in $[n]$.
Therefore the probability that 
$x_{i_\ell}\in C_{i_\ell}$ is precisely $(m-b)/n$ in this case.

Now assume that $i_\ell\in Q_\pi$.  In this case, the conditioning
on $\pi_{B'}(x)=\pi$ implies that $x_{i_\ell}=A_\pi(i_\ell)$ is fixed
and not in $A_\pi(Q_\pi -\{i_\ell\})$.
Again, from the conditioning we know that $C_{i_\ell}$ contains precisely
$m-b$ distinct values.  
Some of the elements that occur in $C_{i_\ell}$ may be inferred from the
conditioning -- for example, their values may have been queried along $\pi$ --
but we will show that there is significant uncertainty about whether any
of them equals $A_\pi(i_\ell)$.
In this case we will show that the uncertainty persists even if we reveal
(condition on) the locations of
all occurrences of the elements $A_\pi(Q_\pi -\{i_\ell\})$ among the $x_i$
for $i>i_\ell$.

Other than the information revealed about the occurrences of the elements
$A_\pi(Q_\pi -\{i_\ell\})$ among the $x_i$ for $i>i_\ell$,
the conditioning on the events $y_{i_\ell+1}=m$, 
$\textbf{1}_{\{x_{i_{\ell}+n} \not \in C_{i_{\ell}}\}} =b$,
$\mathcal{E}_\ell$, and
$\mathcal{G}_{\ell+1}$,
only biases the numbers of distinct elements and patterns of equality among
inputs $x_i$ for $i>i_\ell$.
Further the conditioning on $\pi_{B'(x)}=\pi$ does not reveal anything
more about the inputs in $C_{i_\ell}$ than is given by the occurrences of
$A_\pi(Q_\pi -\{i_\ell\})$.
Let $\mathcal{A}$ be the event that all the conditioning is true.

Let $q'=|A_\pi(Q_\pi -\{i_\ell\})|\le q-1$
and let $q''\le q'$ be the number of distinct elements of
$A_\pi(Q_\pi -\{i_\ell\})$ that appear in $C_{i_\ell}$.
Therefore, since the input is uniformly chosen,  subject to the conditioning,
there are $m-b-q''$ distinct elements
of $C_{i_\ell}$ not among $A_\pi(Q_\pi -\{i_\ell\})$, and these distinct
elements are
uniformly chosen from among the elements $[n]-A_\pi(Q_\pi-\{i_\ell\})$.
Therefore, the probability that any of these $m-b-q''$ elements is
equal to $x_{i_\ell}=A_\pi(i_\ell)$ is precisely $(m-b-q'')/(n-q')$ in this
case.

It remains to analyze the extreme cases of 
the probabilities $(m-b)/n$ and $(m-b-q'')/(n-q')$ from the discussion above.
Since $q=\epsilon n$, $q''\le q'\le q-1$, and $b\in \{0,1\}$, we have the
probability
$\Pr[x_{i_\ell} \in C_{i_\ell}\mid \mathcal{A}] \leq \frac{m}{n - q+1} \leq
\frac{0.85n}{n- \epsilon n} \leq
\frac{0.85n}{n(1-\epsilon)} \leq  0.85/(1-\epsilon)\le 17/18$ since
$\epsilon\le 1/10$.
Similarly,
$\Pr[x_{i_\ell} \notin C_{i_\ell}\mid \mathcal{A}] 
< 1- \frac{m-q}{n}\leq 1- \frac{ 0.5 n- \epsilon n}{n} \leq 0.5+\epsilon\le 3/5$ since $\epsilon \le 1/10$.
Plugging in the larger of these upper bounds in~\eqref{chain rule},
we get:
\[
\Pr [z_{i_1}, \ldots, z_{i_r} \mbox{ are correct for }
F_0^{\boxplus n}(x) ,\ \mathcal{E}\mid \pi_{B'}(x)=\pi]
\leq (17/18)^{r},
\]
which proves the lemma.
\end{proof}

\paragraph{Putting the Pieces Together}
We now combine the above lemmas.
Suppose that $T S \le n^2/4800$ and let $q=n/10$.
We can assume without loss of generality that $S\ge \log_2 n$ since
we need $T\ge n$ to determine even a single answer.

Consider the fraction of inputs in $[n]^{2n-1}$ on which $B$ correctly computes
$F_0^{\boxplus n}$.
By Lemma~\ref{manyruns},
for input $x$ chosen uniformly from $[n]^{2n-1}$, the probability that
$\mathcal{E}$ holds and there are at least
$n/24$ positions $j<n$ such that $x_j$ is unique in $x$
is at least $1-4ne^{-n/50}$.
Therefore, in order to be correct on any such $x$, $B$ must correctly
produce outputs from at least $n/24$ outputs at positions $j<n$ such that 
$x_j$ is unique in $x$.

For every such input $x$, by our earlier outline, one of the $2^S$
$[n]$-way branching programs $B'$ of height $q$ contained in $B$ produces
correct output values for $F_0^{\boxplus n}(x)$  in at least
$r=(n/24)q/T\ge 20S$ positions $j<n$ such that $x_j$ is unique in $x$.

We now note that for any $B'$, if $\pi=\pi_{B'}(x)$ then the fact
that $x_j$ for $j<n$ is unique in $x$ implies that $j$ must be $\pi$-unique.
Therefore, for all but a $4ne^{-n/50}$ fraction of inputs $x$
on which $B$ is correct, $\mathcal{E}$ holds for $x$ and there is one of the
$\le 2^S$ branching programs
$B'$ in $B$ of height $q$ such that the path $\pi=\pi_{B'}(x)$ produces
at least $20S$ outputs at $\pi$-unique positions 
that are correct for $x$.

Consider a single such program $B'$. By Lemma~\ref{runs-lemma}
for any path $\pi$ in $B'$, the fraction of inputs
$x$ such that $\pi_{B'}(x)=\pi$ for which $20S$ of these outputs are
correct for $x$ and produced at $\pi$-unique positions, and
$\mathcal{E}$ holds for $x$ is at most $(17/18)^{20S}< 3^{-S}$.
By Proposition~\ref{manyruns}, this same bound applies to the fraction
of all inputs $x$ with $\pi_{B'}(x)=\pi$ for which $20S$ of these outputs
are correct from $x$ and produced at $\pi$-unique positions, and
$\mathcal{E}$ holds for $x$ is at most $(17/18)^{20S}< 3^{-S}$.

Since the inputs following different paths in $B'$ are disjoint,
the fraction of all inputs $x$ for which $\mathcal{E}$ holds
and which follow some path in $B'$ that yields at least
$20S$ correct answers from distinct runs of $x$ is less than $3^{-S}$.
Since there are at most $2^S$ such height $q$ branching programs,
one of which must produce $20S$ correct outputs from distinct runs of $x$
for every remaining input,
in total only a $2^S 3^{-S}=(2/3)^S$ fraction of all inputs have these
outputs correctly produced.

In particular this implies that $B$ is correct on at most a
$4ne^{-n/50}+(2/3)^S$
fraction of inputs.
For $n$ sufficiently large
this is smaller than $1-\eta$ for any $\eta<1-2^{-\delta S}$ for some
$\delta>0$, which contradicts our original assumption.  This completes the proof of Theorem~\ref{main_theorem}.
\end{proof}

\paragraph{Lower bound for $ (F_0\bmod 2)^{\boxplus n}$}
We describe how to modify the proof of Theorem~\ref{main_theorem} for
computing $F_0^{\boxplus n}$
to derive the same lower bound for computing $(F_0\bmod 2)^{\boxplus n}$.
The only difference is in the proof of Lemma~\ref{runs-lemma}.
In this case, each output $y_i$ is $F_0(W_i)\bmod 2$ rather than $F_0(W_i)$
and \eqref{indicators} is replaced by
\begin{equation} 
y_{i} = 
(y_{i+1}+ \textbf{1}_{\{x_{i} \not \in C_i\}}
-
\textbf{1}_{\{x_{i+n} \not \in C_i\}})\bmod 2.
\end{equation}
The extra information revealed (conditioned on) will be the same as in the
case for $F_0^{\boxplus n}$ but, because the meaning of $y_i$ has changed,
the notation $y_{i_\ell+1}=m$ is replaced by $F_0(W_{i_\ell+1})=m$, $y_{i_\ell+1}$ is then $m\bmod 2$, and the
upper bound in \eqref{correct-diff} is replaced by
\begin{align*}
\max_{\substack{m \in [0.5n,0.85n]\\ b\in \{0,1\}}}
\Pr [ \textbf{1}_{\{x_{i_{\ell}} \not \in C_{i_{\ell}}\}}
=&(z_{i_\ell} -m+b)\bmod 2\mid \notag\\[-3ex]
&F_0(W_{i_{\ell} +1}) = m,\textbf{1}_{\{x_{i_{\ell}+n} \not \in C_{i_{\ell}}\}} =b,
\mathcal{G}_{\ell+1}, \mathcal{E}_\ell,\pi_{B'}(x)=\pi]
\end{align*}
The uncertain event is exactly the same as before, namely whether or not
$x_{i_{\ell}} \in C_{i_{\ell}}$ and the conditioning is essentially exactly
the same, yielding an upper bound of $17/18$.
Therefore the analogue of Lemma~\ref{runs-lemma} also holds for
$(F_0\bmod 2)^{\boxplus n}$ and hence the time-space tradeoff of
$T\cdot S\in \Omega(n^2)$  follows as before.

\paragraph{Lower Bound for  $F_k^{\boxplus n}$, $k\geq 2$}
We describe how to modify the proof of Theorem~\ref{main_theorem} for
computing $F_0^{\boxplus n}$
to derive the same lower bound for computing $F_k^{\boxplus n}$ for $k\ge 2$.
Again, the only difference is in the proof of Lemma~\ref{runs-lemma}.
The main change from the case of $F_0^{\boxplus n}$
is that we need to replace \eqref{indicators} relating the values of consecutive
outputs.
For $k\ge 2$,
recalling that $f^{(i)}_j$ is the frequency of symbol $j$ in window $W_i$, we
now have
\begin{equation}
\label{indicators-k}
 y_{i} - y_{i+1} =\left[\left(f^{(i)}_{x_i}\right)^k - \left(f^{(i)}_{x_i} - 1 \right)^k \right] - \left[\left(f^{(i+1)}_{x_{i+n}} \right)^k - \left(f^{(i+1)}_{x_{i+n}} -1 \right)^k\right].
\end{equation}
We follow the same outline as in the case $k=0$ in order to bound the
probability that $y_{i_\ell}=z_{i_\ell}$
but we reveal the following information, which is somewhat more
than in the $k = 0$ case: 
\begin{enumerate}
\item $y_{i_\ell+1}$, the value of the output immediately after $y_{i_\ell}$,
\item $F_{0}(W_{i_\ell+1})$,
the number of distinct elements in $W_{i_\ell+1}$, and
\item $f^{(i_\ell+1)}_{x_{i_\ell+n}}$,
the frequency of $x_{i_\ell+n}$ in  $W_{i_\ell+1}$.
\end{enumerate}
For $M\in \mathbb{N}$, $m\in [n]$ and $1\le f\le m$,
define $\mathcal{C}_{M,m,f}$ be the event that
$y_{i_\ell+1}=M$, 
$F_{0}(W_{i_\ell+1})=m$, and
$f^{(i_\ell+1)}_{x_{i_\ell+n}}=f$.  Note that $\mathcal{C}_{M,m,f}$ only
depends on the values in $W_{i_\ell+1}$, as was the case for the information
revealed in the case $k=0$.
As before we can then upper bound the $\ell^\mathrm{th}$ term in the
product given in \eqref{chain rule} by
\begin{equation}
\label{upper-k}
\max_{\substack{m \in [0.5n,0.85n]\\ M\in \mathbb{N},f\in [m]}}
\Pr [y_{i_\ell}= z_{i_\ell}\mid
\mathcal{C}_{M,m,f},
\mathcal{G}_{\ell+1}, \mathcal{E}_\ell,\pi_{B'}(x)=\pi]
\end{equation}
Now, by \eqref{indicators-k}, given event $\mathcal{C}_{M,m,f}$, we have
$y_{i_\ell}=z_{i_\ell}$ if and only if
$z_{i_\ell} - M =\left[\left(f^{(i_\ell)}_{x_{i_\ell}}\right)^k - \left(f^{(i_\ell)}_{x_{i_\ell}} - 1 \right)^k \right] - \left[f^k - (f-1)^k\right]$, which we can express as
as a constraint on its only free parameter $f^{(i)}_{x_i}$,
$$\left(f^{(i_\ell)}_{x_{i_\ell}}\right)^k - \left(f^{(i_\ell)}_{x_{i_\ell}} - 1 \right)^k 
=z_{i_\ell} - M - f^k + (f-1)^k.$$  
Observe that this constraint can be satisfied for at most one positive
integer value of $f^{(i_\ell)}_{x_{i_\ell}}$ and that, by definition,
$f^{(i_\ell)}_{x_{i_\ell}}\ge 1$.
Note that $f^{(i_\ell)}_{x_{i_\ell}}=1$ if and only if
$x_{i_\ell}\notin C_{i_\ell}$, where $C_{i_\ell}$ is defined as in the case
$k=0$.
The probability that 
$f^{(i_\ell)}_{x_{i_\ell}}$ takes on a particular value is at most the larger
of the probability that $f^{(i_\ell)}_{x_{i_\ell}}=1$ or that
$f^{(i_\ell)}_{x_{i_\ell}}> 1$
and hence \eqref{upper-k} is at most
\begin{equation*}
\max_{\substack{m \in [0.5n,0.85n]\\ M\in \mathbb{N},f\in [m],c\in \{0,1\}}}
\Pr [ \textbf{1}_{\{x_{i_{\ell}} \not \in C_{i_{\ell}}\}}=c
\mid \mathcal{C}_{M,m,f},
\mathcal{G}_{\ell+1}, \mathcal{E}_\ell,\pi_{B'}(x)=\pi]
\end{equation*}
We now can apply similar reasoning to the $k=0$ case to argue that this
is at most $17/18$:  The only difference is that 
$\mathcal{C}_{M,m,f}$ replaces the conditions 
$y_{i_\ell+1}=F_0(W_{i_\ell+1})=m$ and
$\textbf{1}_{\{x_{i_{\ell}+n} \not \in C_{i_{\ell}}\}} =b$.
It is not hard to see that the same reasoning still applies with the new
condition.  The rest of the proof follows as before.

\subsubsection{A time-space efficient algorithm for $F_k^{\boxplus n}$}

We now show that our time-space tradeoff lower bound for
$F_k^{\boxplus n}$ is nearly
optimal even for restricted RAM models.

\begin{theorem}
\label{classical-Fk}
There is a comparison-based deterministic RAM algorithm for computing
$F^{\boxplus n}_k$ for any fixed integer $k \geq 0$ with time-space tradeoff
$T\cdot S\in O(n^2\log^2{n})$ for all space bounds $S$ with $\log n\le S\le n$.
\end{theorem}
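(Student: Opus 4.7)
The plan is to partition the $n$ output positions into $n/b$ consecutive blocks of $b$ windows each, where $b = \Theta(S/\log n)$ is the number of input integers storable in $S$ bits, and to compute all $b$ outputs within a block via $b-1$ sliding updates seeded by a single scan of the window region shared by that block. For a block consisting of windows $W_{jb+1},\ldots,W_{jb+b}$, every window in the block contains the common ``middle'' subsequence $M_j = (x_{jb+b},\ldots,x_{jb+n})$ of length $n-b+1$, and differs from the others only by which suffix of the left boundary $L_j = (x_{jb+1},\ldots,x_{jb+b-1})$ and which prefix of the right boundary $R_j = (x_{jb+n+1},\ldots,x_{jb+b+n-1})$ it contains. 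Since $|L_j|,|R_j|=b-1$, both boundary regions fit entirely in workspace.

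Given $F_k(W_{jb+1})$, I obtain $F_k(W_{jb+2}),\ldots,F_k(W_{jb+b})$ by sliding updates. Moving from $W_i$ to $W_{i+1}$ removes one occurrence of $x_i\in L_j$ and inserts one of $x_{i+n}\in R_j$, changing $F_k$ by
\[
[(f^{W_i}_{x_i}-1)^k - (f^{W_i}_{x_i})^k] + [(f^{W_i}_{x_{i+n}}+1)^k - (f^{W_i}_{x_{i+n}})^k]
\]
when $x_i\ne x_{i+n}$ (and by $0$ otherwise). Using the decomposition $f^{W_i}_v = f^{M_j}_v + f^{L_j\cap W_i}_v + f^{R_j\cap W_i}_v$, the contributions $f^{L_j\cap W_i}_v$ and $f^{R_j\cap W_i}_v$ are maintained in $O(1)$ per step via running counters initialized from the in-memory $L_j$ and $R_j$, while each $f^{M_j}_v$ for $v$ in the watched set $V_j = L_j\cup R_j\cup \{x_{jb+b},\,x_{jb+b+n}\}$---the last two values included so that the transition $W_{jb+b}\mapsto W_{(j+1)b+1}$ between blocks can also be executed by one more sliding update---is obtained by a single comparison-based pass over $M_j$, binary-searching each element against the sorted $V_j$. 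This costs $O(n\log b)$ per block.

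To bootstrap, compute $F_k(W_1)$ using a standard comparison-based small-space sort achieving $T\cdot S\in O(n^2\log n)$, followed by a linear scan. Each output $F_k(W_i)$ is written to the write-only tape as soon as it is computed, so the workspace only has to hold the sorted $V_j$, the $O(b)$ running counters, the current $F_k$ value, and a few pointers, totaling $O(b\log n)=O(S)$ bits. With $O(n\log n)$ work per block and $n/b = O(n\log n/S)$ blocks, the total time is $T\in O(n^2\log^2 n/S)$, so $T\cdot S\in O(n^2\log^2 n)$.

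The main point to verify is the inter-block bookkeeping: once block $j$ is finished we must carry $F_k(W_{jb+b})$ to $F_k(W_{(j+1)b+1})$ by one extra sliding update whose two required middle-frequencies are precisely the values $f^{M_j}_{x_{jb+b}}$ and $f^{M_j}_{x_{jb+b+n}}$ that the scan already computed by virtue of our inclusion of these two transition values in $V_j$. It is immediate that the algorithm is comparison-based, since comparisons on input values appear only inside the small-space sort, the sort of $V_j$, and the binary searches against $V_j$, while all remaining operations act on derived integer counts and their $k$-th powers.
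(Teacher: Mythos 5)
Your proposal is correct and follows essentially the same approach as the paper's proof: partition the outputs into blocks of size $\Theta(S/\log n)$, build a small dictionary over the $O(S/\log n)$ boundary elements, make a single pass over the shared middle of each block to count boundary-element occurrences, bootstrap the first output with a small-space comparison sort, and derive consecutive outputs via the sliding-update formula. The only cosmetic differences are your use of a sorted array with binary search in place of the paper's binary search tree with index lists and per-leaf counters, and your explicit inclusion of the two inter-block transition elements in $V_j$ where the paper instead lets consecutive groups share one boundary output so that the transition is handled by the next group's data structure.
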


\begin{proof}
We denote the $i$-th output by $y_i=F_k(x_i,\ldots, x_{i+n-1})$.
We first compute $y_1$ using the
comparison-based $O(n^2/S)$ time sorting
algorithm of Pagter and Rauhe~\cite{pr:comparison-sorting}.
This algorithm produces the list of outputs in order by building a space $S$
data structure $D$ over the $n$ inputs and then repeatedly removing and
returning the index of the smallest element from that structure using a {\sc POP} operation.
We perform {\sc POP} operations on $D$ and keep track of the last index popped.
We also will maintain the index $i$ of the previous symbol seen as well as a
counter that tells us the number of times the symbol has been seen so far.
When a new index $j$ is popped, we compare the symbol at that index with the
symbol at the saved index.
If they are equal, the counter is incremented.
Otherwise, we save the new index $j$, update the running total for $F_k$
using the $k$-th power of the counter just computed, and then reset that
counter to 1.

Let $S'=S/\log_2{n}$.
We compute the remaining outputs in $n/S'$ groups of $S'$ outputs at a time.
In particular, suppose that we have already computed $y_i$.
We compute $y_{i+1},\ldots,y_{i+S'}$ as follows:

We first build a single binary search tree for both $x_i,\ldots, x_{i+S'-1}$
and for $x_{i+n},\ldots, x_{i+n+S'-1}$ and include a pointer $p(j)$ from each
index $j$ to the leaf node it is associated with.
We call the elements $x_i,\ldots, x_{i+S'-1}$ the old elements and add them
starting from $x_{i+S'-1}$.
While doing so we maintain a counter $c_j$ for each index $j\in [i,i+S'-1]$ of
the number of times that $x_j$ appears to its right in $x_i,\ldots,x_{i+S'-1}$.
We do the same for $x_{i+n},\ldots, x_{i+n+S'-1}$, which we call the
new elements, but starting from the left.
For both sets of symbols, we also add the list of indices where each element
occurs to the relevant leaf in the binary search tree.

We then scan the $n-S'$ elements $x_{i+S'},\ldots, x_{i+n-1}$ and maintain
a counter $C(\ell)$ at each leaf $\ell$ of each tree to record the number of
times that the element has appeared.

For $j\in [i,i+S'-1]$ we produce $y_{j+1}$ from $y_j$.
If $x_j = x_{j+n}$ then $y_{j+1} = y_j$.
Otherwise, we can use the number of times the old symbol $x_j$ and the
new symbol $x_{j+n}$ occur in the window $x_{j+1},\dots,x_{j+n-1}$ to
give us $y_{j+1}$.
To compute the number of times $x_j$ occurs in the window, we look at
the current head pointer in the new element list associated with leaf $p(j)$
of the binary search tree.
Repeatedly move that pointer to the right if the next position in the list
of that position is at most $n+j-1$.
Call the new head position index $\ell$.
The number of occurrences of $x_j$  in $x_{j+1}, \ldots, x_{S'}$ and
$x_{n+1}, \ldots, x_{n+j}$ is now $c_j+c_\ell$.
The head pointer never moves backwards and so the total number of pointer
moves will be bounded by the number of new elements.
We can similarly compute the number of times  $x_{j+n}$ occurs in the
window by looking at the current head pointer in the old element list
associated with $p(j+n)$ and moving the pointer to the left until it is
at position no less than $j+1$.
Call the new head position in the old element list $\ell'$.

Finally, for $k>0$ we can output $y_{j+1}$ by subtracting
$(1+c_j+c_\ell +C(p(j)))^k - (c_j+c_\ell +C(p(j))^k$ from $y_j$ and adding
$(1+c_{j+n}+c_{\ell'} +C(p(j+n)))^k - (c_{j+n}+c_{\ell'} +C(p(j+n))^k$.
When $k=0$ we compute $y_{j+1}$ by subtracting the value of the indicator
$\textbf{1}_{c_j+c_\ell +C(p(j)) =0}$ from $y_j$ and adding $\textbf{1}_{c_{j+n}+c_{\ell'} +C(p(j+n)) =0 }$.

The total storage required for the search trees and pointers is $O(S'\log n)$
which is $O(S)$.
The total time to compute $y_{i+1},\ldots, y_{i+S'}$ is dominated by the
$n-S'$ increments of counters using the binary search tree, which is
$O(n\log S')$ and hence $O(n\log S)$ time. This computation must be done $(n-1)/S'$
times for a total of $O(\frac{n^2\log S}{S'})$ time.
Since $S'=S/\log n$, the total time including that to compute $y_1$ is
$O(\frac{n^2 \log n\log S}{S})$ and hence $T\cdot S\in O(n^2\log^2 n)$.
\end{proof}

\subsubsection{A time-space efficient quantum algorithm for
$F_0^{\boxplus n}$ }

We now consider the efficiency of quantum computers computing
frequency moments over sliding windows, in particular whether   
our $T\in \tilde\Theta(n^2/S)$ bound for computing $F_0^{\boxplus n}$ extends
to quantum computation.
We show that this is not the case: quantum algorithms are
significantly more efficient for computing $F_0^{\boxplus n}$ than classical
algorithms are.
The efficiency improvement is due to ideas embodied in Grover's quantum search
algorithm~\cite{grover:quant-search}, both directly and in its extension
to a time-space efficient quantum sorting algorithm due to
Klauck~\cite{klauck:quant-sorting}.
We use these algorithms as black boxes without regard to the details
of quantum computation, so we omit more formal discussions of both quantum
computing and how they work.

\begin{theorem}
\label{quantum-f0}
There is a quantum algorithm for computing
$F^{\boxplus n}_0$ with time-space tradeoff
$T\in O(n^{3/2}\log^{3/2}{n}/S^{1/2})$ for all space bounds $S$ with $\log^3 n\le S\le n/\log n$.
\end{theorem}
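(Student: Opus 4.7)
The plan is to mirror the structure of the classical Theorem~\ref{classical-Fk} for $k=0$, but replace its two expensive classical ingredients by quantum primitives: Klauck's time-space efficient sort for the initial window and an iterated Grover/BBHT search for the per-batch membership queries. I would first compute $y_1 = F_0(W_1)$ by running Klauck's quantum sort on $W_1$, counting distinct values as consecutive equal ones are popped, in time $\tilde O(n^{3/2}/\sqrt{S})$ and space $O(S)$. The standard sliding-window identity
$$y_{j+1} - y_j = \mathbf{1}[x_{j+n} \notin \{x_{j+1},\ldots,x_{j+n-1}\}] - \mathbf{1}[x_j \notin \{x_{j+1},\ldots,x_{j+n-1}\}]$$
then reduces the rest of the problem to $2(n-1)$ membership queries against contiguous $(n-1)$-element subarrays of $x$.

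Next I would process the $n-1$ updates in $n/B$ batches of $B = \Theta(S/\log n)$ consecutive windows. For the batch starting at $j_0$, the middle $M = \{x_{j_0+B},\ldots,x_{j_0+n-1}\}$ is common to every window in the batch, and the at most $B$ old plus $B$ new elements form a deduplicated query set $Q$ of size $|Q|\le 2B$ that I would keep in a sorted balanced BST in $O(B\log n) = O(S)$ space. The key subroutine is to determine, for every $q \in Q$, whether $q \in M$, in total time $\tilde O(\sqrt{nB})$. This is done by an iterated BBHT search: Grover-search the $|M|\approx n$ positions of $M$ marked by the oracle ``$M[i] \in Q_{\mathrm{unmatched}}$'' (implemented by a binary search in the sorted $Q_{\mathrm{unmatched}}$), on success remove the matched value from $Q_{\mathrm{unmatched}}$, and terminate when a final amplitude-estimation step certifies that no marked position remains. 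Letting $\mu_i$ be the multiplicity in $M$ of the $i$th value removed and $c_r = c_0 - \sum_{i\le r}\mu_i$, round $r$ costs expected $O(\sqrt{n/c_r}\,\log n)$ time.

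The main technical obstacle is showing $\sum_{r}\sqrt{n/c_r} = O(\sqrt{nB})$ uniformly in the $\mu_i$'s. The critical case is $\mu_i = 1$ throughout: if $c_0 \le B$ then the sum is $\sqrt n\sum_{k=1}^{c_0} k^{-1/2} = O(\sqrt{nc_0}) = O(\sqrt{nB})$, and if $c_0 > B$ an integral estimate gives $\sqrt n\cdot 2(\sqrt{c_0}-\sqrt{c_0-B}) \le 2\sqrt{nB}$. A short exchange/monotonicity argument then shows that replacing any $\mu_i=1$ by a larger value only decreases the sum, so the bound extends to arbitrary multiplicities. Amplifying each Grover subroutine $O(\log n)$ times drives the aggregate failure probability to $o(1/n)$, giving $\tilde O(\sqrt{nB})$ per batch with high probability.

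Once the $M$-membership bit for each $q \in Q$ is available, I would sweep classically through the $B$ updates in the batch and, for each $j \in [j_0,j_0+B-1]$, combine the precomputed bit for $x_j$ (resp.\ $x_{j+n}$) with two $O(\log B)$-time lookups in the small sorted old-prefix and new-suffix sets to decide the two indicators in the identity above, updating $y_j \to y_{j+1}$. Summing, the cost is $\tilde O(n^{3/2}/\sqrt{S}) + (n/B)\cdot \tilde O(\sqrt{nB}) = \tilde O(n^{3/2}/\sqrt{B}) = O(n^{3/2}\log^{3/2} n/\sqrt{S})$ on plugging in $B = \Theta(S/\log n)$, and the total space stays $O(S)$. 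The hypothesis $\log^3 n \le S \le n/\log n$ ensures both $B \ge 1$ and that the polylog overheads in Klauck's sort, the BBHT oracle, and the error amplification do not swamp the stated bound.
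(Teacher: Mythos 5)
Your proposal follows the paper's proof essentially verbatim: compute $y_1$ with Klauck's quantum sort, process the remaining updates in blocks of $B=\Theta(S/\log n)$ consecutive windows, store the $O(B)$ boundary elements in a balanced search tree, and repeatedly Grover-search the common $(n-1)$-subarray for occurrences of boundary elements, removing each found value from the oracle until none remain. Two small remarks on your analysis of the iterated Grover search: the exchange/monotonicity argument over the multiplicities $\mu_i$ is more work than needed---since there are at most $b\le 2B$ rounds and $c_r\ge b-r$ marked positions remain after $r$ removals, one gets $\sum_{r<b}\sqrt{n/c_r}\le\sum_{j=1}^{b}\sqrt{n/j}=O(\sqrt{nb})=O(\sqrt{nB})$ directly, which is the paper's accounting. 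Also, the extra $O(\log n)$ amplification factor you charge per Grover call for $o(1/n)$ aggregate error would, once $B=\Theta(S/\log n)$ is substituted, yield $O(n^{3/2}\log^{5/2}n/S^{1/2})$ rather than the stated $\log^{3/2}n$; the paper's bound accounts only for the $\log S'$ oracle cost and the $\sqrt{\log n}$ from $B=S/\log n$ and elides explicit error-probability bookkeeping.
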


\begin{proof}
%Denote the $i$-th output by $y_i=F_0(x_i,\ldots, x_{i+n-1})$ as before.
To compute the first output $y_1$ of $F_0^{\boxplus n}$, we run Klauck's
quantum sorting algorithm~\cite{klauck:quant-sorting} in
time $O(n^{3/2}\log^{3/2}{n}/S^{1/2})$.
We let $S'=S/\log n$ and
follow the same idea as in the proof of Theorem~\ref{classical-Fk} and
determine the remaining outputs in $n/S'$ blocks of $S'$ consecutive values
at a time. 

We can divide up the input positions on which the output values depend into
two groups, $O(S')$ {\em boundary} elements that are associated with some but
not all of the $S'$ output values and $n-S'+1$ {\em common} elements that are
associated with all of the $S'$ output values in the block.
As before, we can determine the values of the outputs in a block given (1) the
output value immediately prior to the block, (2) the pattern of appearances
within the boundary elements, and (3) which of the $O(S')$ boundary elements
appear in the common vector of $n-S'+1$ elements.
As before we build a binary search tree for the $O(S')$ boundary elements
as in the classical algorithm which allows us to handle the pattern of
appearances within the boundary elements and has very low cost.
The cost is dominated by that of determining which boundary elements appear in
the common vector of elements. 

We use a standard multi-target variant of Grover's algorithm~(e.g.\cite{bbht:tight-qsearch,gr:quantumsearch}).
In particular, we begin by searching for some member of the common vector
of elements.   Let $b\le S'$ be the number of different boundary elements 
that actually appear in the common vector.  With the binary search tree it
costs only $O(\log S')$ time to test if a value equals one of the binary
elements, so we can find the index of one
such element using a variant of Grover search~\cite{bbht:tight-qsearch} in
$O(\sqrt{n/b}\log S')$ time since there will be at least $b$ potential indices.
This requires only $O(\log n)$ additional space over the cost of storing the
binary search tree.
We record that this boundary element occurs along the common elements in the
binary search tree, remove it from the set of allowable answers and repeat.   
This takes $O(\sqrt{n/(b-1)}\log S')$ time.
We continue $b$ times until we have found all $b$ of the boundary elements
that appear in the common vector.
The total cost of these Grover searches is $O(\sqrt{nb}\log S')$ which is
$O(\sqrt{nS'}\log S')$.
Since this is being done $n/S'$ times, the total time for this part of the
algorithm is $O(n^{3/2}(\log S')/\sqrt{S'})$, which is
$O(n^{3/2}\log^{3/2} n/S^{1/2})$.   
This yields the claimed time bound.
\end{proof}

The above  algorithm and the $TS =\Omega(n^2)$ classical time-space tradeoff
lower bound for $(F_0 \bmod 2)^{\boxplus n}$ prove that quantum computers have
an advantage over classical ones in sliding-windows.

\section{Order Statistics in Sliding Windows}\label{sec:order}

We first show that when order statistics are extreme, their complexity over
sliding windows does not significantly increase over that of a single
instance.

\begin{theorem}
\label{maxUBthm}
There is a deterministic comparison algorithm that computes
$MAX_n^{\boxplus n}$ (equivalently $MIN_n^{\boxplus n}$) using
time $T\in O(n\log n)$ and space $S\in O(\log n)$.
\end{theorem}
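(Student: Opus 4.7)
My plan is a divide-and-conquer algorithm on the range $[1,n]$ of output window indices, exploiting the fact that any contiguous range $[\ell,r]\subseteq[1,n]$ of windows has the common input sub-interval $[r,\,\ell+n-1]$ contained in every window of the range; in particular, all $n$ windows contain position $n$. I will introduce a recursive procedure $\textsc{Solve}([\ell,r],\,C)$ maintaining the invariant $C=\max_{j\in[r,\,\ell+n-1]} x_j$, which outputs the correct window-max for each $i\in[\ell,r]$. In the base case $\ell=r$ the common range equals the whole window, so we simply emit $C$ as the answer for window $\ell$.

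For the recursive step, set $m=\lfloor(\ell+r)/2\rfloor$. The left half $[\ell,m]$ has common range $[m,\ell+n-1]$, so a left-to-right scan of positions $[m,r-1]$ yields $C_L=\max(C,\,x_m,\ldots,x_{r-1})$, after which we recurse on $\textsc{Solve}([\ell,m],C_L)$; symmetrically, a scan of $[\ell+n,m+n]$ extends $C$ to $C_R$ and we recurse on $\textsc{Solve}([m+1,r],C_R)$. The top-level call is $\textsc{Solve}([1,n],\,x_n)$; the two invariant-preservation identities follow immediately from the set equalities $[m,\ell+n-1]=[m,r-1]\cup[r,\ell+n-1]$ and $[r,m+n]=[r,\ell+n-1]\cup[\ell+n,m+n]$.

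For the running-time analysis I use a level-by-level accounting: at any fixed depth of the implicit recursion tree, the extension-scans performed by the various sub-problems touch pairwise disjoint positions of the input and contribute $O(n)$ comparisons in total at that depth; summing over the $O(\log n)$ levels gives $T\in O(n\log n)$. The bound for $MIN_n^{\boxplus n}$ is obtained by the same algorithm with every comparison reversed.

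The main obstacle will be bringing the space down from the $O(\log^2 n)$ bits used by a naive recursive call stack to the claimed $O(\log n)$ bits. I would convert the depth-first traversal into an iteration whose persistent state consists only of the current window index $i\in[1,n]$ and a constant number of running counters and maxes; the node of the implicit recursion tree being visited is then read off from the binary expansion of $i$, and on each backtrack the needed ancestor $C$-value is regenerated on the fly by re-scanning just the fresh strip of input positions exposed by the ascent, with the same level-by-level amortization absorbing the extra work into the $O(n\log n)$ time budget.
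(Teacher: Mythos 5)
Your divide-and-conquer on the window-index range, with the invariant $C=\max_{j\in[r,\ell+n-1]}x_j$, is a correct and genuinely different decomposition from the paper's, and your time analysis is sound: the left scans at any fixed depth lie in disjoint sub-intervals of $[1,n-1]$, the right scans in disjoint sub-intervals of $[n+1,2n-1]$, so each level costs $O(n)$ and the total is $O(n\log n)$. The problem is the space. A binary recursion of depth $\Theta(\log n)$ must, on returning from a left subtree to an internal node $v$, still have $C_v$ available to form $C_R$; these ancestor values form a chain of $\Theta(\log n)$ independent $O(\log n)$-bit quantities, so the straightforward implementation uses $\Theta(\log^2 n)$ bits, not the claimed $O(\log n)$. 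Your proposed fix does not repair this. When you ascend from leaf $i$ to its LCA $v=[\ell,r]$ with leaf $i+1$, what was ``exposed'' by the descent into the left subtree is the short strip $[i,r-1]\cup[\ell+n,i+n-1]$ of length $r-\ell$, and to go from the leaf's max $\max_{[i,i+n-1]}x_j$ to $C_v=\max_{[r,\ell+n-1]}x_j$ you would have to \emph{remove} that strip's contribution --- but $\max$ is not invertible. Recomputing $C_v$ from scratch instead costs $n-(r-\ell)$ comparisons, which is $\Theta(n)$ for every $v$ below the root; since there are $2^d$ backtracks whose LCA sits at depth $d$, the total is $\sum_d 2^d\bigl(n-n/2^d\bigr)=\Theta(n^2)$, blowing the time budget. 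So the amortization you appeal to does not exist.

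The paper sidesteps this by making the recursion tail-recursive, which is the idea your plan is missing. It finds the maximum $m$ of the single central window $[\lceil n/2\rceil,\,n+\lceil n/2\rceil-1]$, observes that $m$ lands in one half of that window and is therefore the dominant candidate for all windows on that side, emits those $\lceil n/2\rceil$ outputs by a single linear sweep that maintains one running max, and then makes \emph{one} recursive call on the remaining region. Since the recursive call is last and unique, the whole thing is a loop whose persistent state is just the two endpoints of the current region and one running max --- $O(\log n)$ bits. If you want to preserve your balanced split you will still face the $\Theta(\log n)$-deep chain of incomparable $C$ values; to hit $O(\log n)$ bits you need, as the paper does, a decomposition with a single recursive call in tail position.
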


\begin{proof}
Given an input $x$ of length $2n-1$, we consider the window of $n$
elements starting at position $ \lceil\frac{n}{2} \rceil$ and ending
at position $n+\lceil\frac{n}{2}\rceil-1$ and find the largest element
in this window naively in time $n$ and space $O(\log n)$; call it $m$.
Assume without loss of generality that $m$ occurs between positions
$ \lceil\frac{n}{2} \rceil$ and $n$, that is, the left half of the
window we just considered. Now we slide the window of length $n$ to
the left one position at a time. At each turn we just need to look
at the new symbol that is added to the window and compare it to $m$.
If it is larger than $m$ then set this as the new maximum for that
window and continue.

We now have all outputs for all windows that start in positions 1 to
$\lceil\frac{n}{2} \rceil$. For the remaining outputs, we now run
our algorithm recursively on the remaining
$n+\lceil\frac{n}{2}\rceil$-long region of the input.  We only need to maintain the left and right endpoints of the current region.
At each level in the recursion, the number of outputs is halved and
each level takes $O(n)$ time. Hence, the overall time complexity is
$O(n\log n)$ and the space is $O(\log n)$.
\end{proof}

In contrast when an order statistic is near the middle, such as the median,
we can derive a significant separation in complexity between that of the
sliding-window version and that of a single instance.
This follows by a simple reduction and known
time-space tradeoff lower bounds for sorting~\cite{bc:sorting,bea:sorting}.

\begin{theorem}
Let $B$ be a branching program computing $O_t^{\boxplus n}$ in time
$T$ and space $S$ on an input of size $2n-1$, for any $t \in [n]$.
Then $T\cdot S \in \Omega(t^2)$ and the same bound applies to expected time
for randomized algorithms.
\end{theorem}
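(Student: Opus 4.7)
The plan is to reduce sorting of $t$ elements to computing $O_t^{\boxplus n}$ by a careful padding of the input, and then invoke the known time--space lower bound $T\cdot S\in\Omega(t^2)$ for sorting $t$ elements on (randomized) multi-way branching programs from~\cite{bc:sorting,bea:sorting}. The reduction is clean for $t\le n/2$; the case $t>n/2$ reduces to the former by the symmetry $O_t^{\boxplus n}(x)=O_{n-t+1}^{\boxplus n}(x')$, where $x'$ is $x$ with the total order on values reversed, giving the same lower bound with $t$ replaced by $n-t+1$.

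Given elements $a_1,\ldots,a_t$ to sort, I build an instance $x$ of length $2n-1$ for $O_t^{\boxplus n}$ as follows. Place $a_1,\ldots,a_t$ at positions $t,t+1,\ldots,2t-1$; fill positions $1,\ldots,t-1$ and $2t,\ldots,n$ with distinct ``large'' values each strictly greater than $\max_j a_j$; fill positions $n+1,\ldots,n+t-1$ with distinct ``small'' values each strictly less than $\min_j a_j$; and fill positions $n+t,\ldots,2n-1$ arbitrarily since they will not affect the first $t$ outputs.

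The key (routine) calculation is that for every $i\in\{1,\ldots,t\}$ the window $[i,i+n-1]$ contains exactly (i) all $t$ of the $a_j$'s, (ii) the $i-1$ small pads at positions $n+1,\ldots,n+i-1$, and (iii) the remaining $n-t-i+1$ large pads, totalling $n$ elements. Sorting this window thus places $i-1$ smalls first, then the $a_j$'s in sorted order, then the larges, so the $t$-th smallest is precisely the $(t-i+1)$-th smallest of $\{a_1,\ldots,a_t\}$. Hence the first $t$ outputs $y_1,\ldots,y_t$ of $O_t^{\boxplus n}(x)$ are $a_{(t)},a_{(t-1)},\ldots,a_{(1)}$, revealing the complete sorted order.

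Any branching program of time $T$ and space $S$ computing $O_t^{\boxplus n}$ therefore yields, by hard-wiring the pad positions as constants and reading only $y_1,\ldots,y_t$, a branching program of the same time and space that sorts $t$ elements; by~\cite{bc:sorting,bea:sorting} we obtain $T\cdot S\in\Omega(t^2)$, and the same bound for the expected time of randomized algorithms follows from the randomized sorting lower bound of~\cite{bea:sorting}. The main obstacle to worry about is whether the multi-way sorting lower bound truly transfers through an embedding into a longer input with fixed positions, but this is immediate: the sorting branching program obtained simply never benefits from the fixed coordinates, which can be regarded as not being part of the input at all.
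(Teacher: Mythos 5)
Your proof is correct and is essentially the same reduction as the paper's: embed the $t$ elements to be sorted adjacent to the boundary between left-padding with large values and right-padding with small values, so that as the window slides from position $1$ to position $t$ it successively trades one large pad for one small pad, and the $t$-th smallest element of window $i$ is the $(t-i+1)$-th smallest of the $t$ embedded elements; the first $t$ outputs therefore reveal the full sorted order, and the Borodin--Cook/Beame sorting lower bound gives $T\cdot S\in\Omega(t^2)$. The paper places the $t$ elements at positions $n-t+1,\ldots,n$, uses the single constant value $n$ for all left pads and $1$ for all right pads, and does not spell out the domain restriction or the hard-wiring argument; you place them at positions $t,\ldots,2t-1$ with distinct pad values, which is an immaterial variation. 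One small point worth noting: your explicit treatment of $t>n/2$ via the order-reversal symmetry yields $\Omega((n-t+1)^2)$, not $\Omega(t^2)$, and indeed the paper's own construction (like yours) only works directly for $t\lesssim n/2$; the theorem as literally stated cannot hold for $t=n$, since Theorem~\ref{maxUBthm} gives a $T\cdot S\in O(n\log^2 n)$ upper bound for $O_n^{\boxplus n}$. So your honesty about the $t>n/2$ case is, if anything, more careful than the paper's exposition, which elides this.
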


\begin{proof}
We give lower bound for $O_t^{\boxplus n}$ for $t \in[n]$ by showing a
reduction from sorting.
Given a sequence $s$ of $t$ elements to sort taking values in
$\{2, \ldots, n-1\}$, we create a $2n-1$ length string as follows: the first
$n-t$ symbols take the same value of $n$, the last $n-1$ symbols take
the same value of 1 and we embed the $t$ elements to sort in the
remaining $t$ positions, in an arbitrary order.
%\begin{figure}[t]
%\centering
%\includegraphics[scale = 0.3]{median}
%\caption{\small Reduction from Sorting to $O_t^{\boxplus n}$}
%\label{fig:median}
%\end{figure}
For the first window, $O_t$ is the maximum of the sequence $s$. As
we slide the window, we replace a symbol from the left, which has
value $n$, by a symbol from the right, which has value 1. The
$t^{th}$ smallest element of window $i = 1, \ldots,t $ is the
$i^{th}$ largest element in the sequence $s$. Then the first $t$
outputs of $O_t^{\boxplus n}$ are the $t$ elements of the sequence
$s$ output in increasing order.
The lower bound follows from~\cite{bc:sorting,bea:sorting}.
As with the bounds in~Corollary~\ref{average-and-random}, the proof methods
in~\cite{bc:sorting,bea:sorting} also immediately extend to average case and
randomized complexity.
\end{proof}

%Using the $TS \in \Omega(n^2)$ time-space tradeoff derived in
%\cite{bea:sorting} for sorting n elements, we get the following time-space
%tradeoff for $O_i^{\boxplus n}$:
%
%\begin{theorem}
%Let $P$ be a branching program computing $O_i^{\boxplus n}$ in time
%$T$ and space $S$ on an input of size $2n-1$, for any $i \in [n]$.
%Then $TS \in \Omega(i^2)$.
%\end{theorem}

For the median ($t = \lceil n/2\rceil$), 
there is an errorless randomized algorithm for the single-input version 
with $\overline T \in O(n \log \log_S n)$ for $S\in \omega(\log n)$
and this is tight for comparison algorithms~\cite{chan:selection-journal}.

\section{Discussion}

We have shown a new sharper $T\in\tilde O(n^{3/2}/S^{1/2})$ upper bound for
randomized branching programs solving the element distinctness problem. 
Our algorithm is also implementable in similar time and space by RAM algorithms
using input randomness.   The impediment in implementing it on a RAM
with online randomness is our use of truly random hash functions $h$.

It seems plausible that a similar analysis would hold if those hash functions
were replaced by some $\log^{O(1)} n$-wise independent hash function such
as $h(x)=(p(x)\bmod m)\bmod n$ where $p$ is a random polynomial of degree
$\log^{O(1)}n$, which can be specified in space $\log^{O(1)}n$ and evaluated
in time $\log^{O(1)}n$, which would suffice for a similar
$T\in\tilde O(n^{3/2}/S^{1/2})$ randomized RAM algorithm with the most
natural online randomness.    The difficulty in analyzing this is the
interaction of the chaining of the hash function $h$ with the values of the
$x_i$.

It remains to be able to produce a time-space tradeoff separation for the
single-output rather than between the windowed versions of $ED$ and either
$F_k$ or $F_0\bmod 2$.  
In the very different context of quantum query complexity
two of the authors of this paper showed another separation between the
complexities of the $ED$ and $F_0\bmod 2$ problems.
$ED$ has quantum query complexity $\Theta(n^{2/3})$ (lower bound
in~\cite{as:quantum-collision} and matching quantum query algorithm
in~\cite{ambainis:distinctness}).  On other hand, the 
lower bounds in~\cite{bm:qac0} imply that $F_0\bmod 2$ has quantum query
complexity $\Omega(n)$.

\section*{Acknowledgements}
The authors would like to thank Aram Harrow, Ely Porat and  Shachar Lovett for a number of insightful discussions and helpful comments during the preparation of this paper.
\bibliographystyle{initials-plain}
\bibliography{theory,extra}

\appendix
\newpage
\section{A fast average case algorithm for $ED^{\boxplus n}$ with
alphabet $[n]$}
\label{sec:average}

We now give the algorithm and proof for Theorem~\ref{average-window-ED}.
The simple method we employ is as follows.  We start at the first window of
length
$n$ of the input and perform a search for the first duplicate pair starting
at the right-hand end of the window and going to the left.
We check if a symbol at position $j$ is involved in a duplicate by simply

scanning all the symbols to the right of position $j$ within the window.
If the algorithm finds a duplicate in a suffix of length $x$, it shifts the
window to the right by $n-x+1$ and repeats the procedure from this point.
If it does not find a duplicate at all in the whole window, it simply moves
the window on by one and starts again.

In order to establish the running time of this simple method,
we will make use of the following birthday-problem-related facts.

\begin{lemma}\label{lemma:birthday}
Assume that we sample i.u.d.\@ with replacement from $[n]$ with $n\ge 4$.
Let $X$ be a discrete random variable
that represents the number of samples taken when the first duplicate is found.
Then
\begin{equation}
\Pr\left(X \geq n/2\right) \leq e^{-\frac{n}{16}}. \label{birthday:0}
\end{equation}
% There exists a fixed value $A > 0$ such that
% \begin{equation}
% \mathbb{E}(T) \leq A\sqrt{n}. \label{birthday:1}
% \end{equation}
and
\begin{equation}
\mathbb{E}(X^2) \leq 4n. \label{birthday:2}
\end{equation}
\end{lemma}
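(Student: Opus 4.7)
The plan is to derive both inequalities from a single uniform tail bound on $X$. Since the event $\{X \geq k\}$ is exactly the event that the first $k-1$ samples are pairwise distinct, the product formula for sampling without replacement gives
\[
\Pr(X \geq k) \;=\; \prod_{i=1}^{k-2}\!\left(1 - \tfrac{i}{n}\right) \;\leq\; \exp\!\left(-\tfrac{(k-1)(k-2)}{2n}\right),
\]
by applying $1-x \leq e^{-x}$ to each factor and using $\sum_{i=1}^{k-2} i = (k-1)(k-2)/2$. Note also that $X \leq n+1$ deterministically by pigeonhole, so every sum I consider below is finite.

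For \eqref{birthday:0}, I would specialize the tail bound at $k = \lceil n/2 \rceil$ and verify the elementary algebraic inequality $(k-1)(k-2)/(2n) \geq n/16$ for all $n$ above a small constant; the remaining values of $n \geq 4$ can be checked by a direct computation of the finite product above. For \eqref{birthday:2}, I would use the layer-cake identity
\[
\mathbb{E}(X^2) \;=\; \sum_{k \geq 1}(2k-1)\,\Pr(X \geq k),
\]
insert the exponential tail bound, reindex by $j = k-1$, and compare the resulting discrete sum to the Gaussian integral $\int_0^\infty (2t+1) e^{-t^2/(2n)}\,dt = 2n + O(\sqrt{n})$. Alternatively, one may split the sum at a cutoff $k^\ast \approx 2\sqrt{n}$: use the trivial bound $\Pr(X \geq k) \leq 1$ for $k \leq k^\ast$, contributing at most $(k^\ast)^2 \leq 4n$, and use the exponential decay for $k > k^\ast$, where the sum collapses to $O(\sqrt{n})$ by a standard Gaussian-tail estimate.

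The main obstacle is pinning down the constant in \eqref{birthday:2}. Since $\mathbb{E}(X^2) = \Theta(n)$ asymptotically, the target $4n$ is only a constant factor above the truth, so the two-regime split cannot be carried out completely naively: the cutoff must be chosen strictly below $2\sqrt{n}$ so that the exponential tail fits inside the slack left by the short range, or equivalently one must use the slightly tighter inequality $\sum_{k \leq k^\ast}(2k-1) \leq (k^\ast)^2$ together with a Gaussian comparison for the tail. Part \eqref{birthday:0} is then routine once the uniform tail bound is in hand, since the exponent $n/16$ is substantially weaker than what the bound directly yields.
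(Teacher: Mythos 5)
Your high-level plan (exponential tail bound from the product formula, then layer-cake for $\mathbb{E}(X^2)$) is the same as the paper's, but there is an off-by-one mismatch in how you read the definition of $X$ that would actually break part \eqref{birthday:0} for several of the small $n$ you propose to check by hand, and a constant-tracking issue in part \eqref{birthday:2} that the paper sidesteps by a deliberate weakening that you do not make.

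On the first point: you take $X$ to be the total number of draws up to and including the one that creates the first duplicate, so $\{X \geq k\}$ is ``first $k-1$ draws distinct'' and $\Pr(X \geq k) = \prod_{i=1}^{k-2}(1-i/n)$. The paper's quantity is one smaller --- the length of the initial run of distinct draws --- which you can see from the application in the proof of Theorem~\ref{average-window-ED}, where $X = n - M$ and $M$ is the leftmost index reached when scanning right-to-left, so $\Pr(X \geq x) = \prod_{i=1}^{x-1}(1-i/n)$. Under your reading, \eqref{birthday:0} is false for several admissible $n$: at $n=4$, $\Pr(X \geq 2)=1 > e^{-1/4}\approx 0.78$; at $n=8$, $\Pr(X \geq 4)=(7/8)(6/8)\approx 0.656 > e^{-1/2}\approx 0.607$; similarly at $n=5,6$. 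So the fallback ``check the remaining values of $n\ge 4$ by direct computation of the finite product'' would not close the gap --- it would expose the index shift. Under the paper's convention the exponent calculation at $x=n/2$ is $x(x-1)/(2n) = n/8 - 1/4 \geq n/16 \iff n \geq 4$, with no exceptional cases at all.

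On the second point: the paper intentionally loosens $e^{-x(x-1)/(2n)}$ to $e^{-x^2/(4n)}$ (valid once $x\ge 2$) precisely so that after writing $\mathbb{E}(X^2)=\sum_{x\ge 1}\Pr(X\ge\sqrt{x})\le\sum_{x\ge1}e^{-x/(4n)}$ the series is \emph{monotone decreasing}, and hence termwise below $\int_0^\infty e^{-x/(4n)}\,dx = 4n$. Your identity $\mathbb{E}(X^2)=\sum_k(2k-1)\Pr(X\ge k)$ is equally valid, but the summand $(2k-1)e^{-(k-1)(k-2)/(2n)}$ first rises and then falls, so it is not termwise below the corresponding Gaussian integral, and you do genuinely inherit an $O(\sqrt{n})$ correction. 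You correctly flag this, and the true order of $\mathbb{E}(X^2)$ is about $2n$ so there is slack, but you leave the bookkeeping unfinished; the paper's layer-cake applied directly to $X^2$ removes the need for it.
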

\begin{proof}
For every $x\ge 1$ we have 
\begin{equation*}
\Pr(X \geq x) =    \prod_{i=1}^{x-1}\left(1-\frac{i}{n}\right)
        \leq   \prod_{i=1}^{x-1} e^{-\frac{i}{n}}
        \leq e^{-\frac{x^2}{4n}}.
\end{equation*}
Inequality~\eqref{birthday:0} now follows by substituting $x= n/2$ 
giving
\[
 \Pr\left(X \geq \frac{n}{2}\right) \leq e^{-\frac{n}{16}}.
\]
To prove inequality~\eqref{birthday:2}, recall that for non-negative valued
discrete random variables
\[
\mathbb{E}(X) = \sum_{x=1}^{\infty} \Pr(X \geq x).
\]
Observe that
\begin{align*}
 \mathbb{E}(X^2) &= \sum_{x=1}^{\infty} \Pr(X^2 \geq x)
    = \sum_{x=1}^{\infty} \Pr(X \geq \sqrt{x})\\
    &\leq \sum_{x=1}^{\infty} e^{-\frac{(\sqrt{x})^2}{4n}}
    \leq \int_{x=0}^{\infty} e^{-\frac{(\sqrt{x})^2}{4n}}
    = 4n.
\end{align*}
\end{proof}

We can now show the running time of our average case algorithm for
$ED^{\boxplus n}$.

\begin{theorem}
For input sampled  i.u.d.\@ with replacement from alphabet $[n]$,
$ED^{\boxplus n}$ can be solved in average time $\overline{T} \in O(n)$ 
and average space $\overline{S} \in O(\log{n})$.
\end{theorem}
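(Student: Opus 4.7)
The plan is to bound the expected total number of comparisons by a per-evaluation analysis, where by an \emph{evaluation} I mean one complete right-to-left suffix scan of a window. Let $X_i$ denote the length of the suffix scanned in the $i$-th evaluation (with $X_i := n$ when no duplicate is found). Then evaluation $i$ performs $\binom{X_i}{2} = O(X_i^2)$ comparisons and is followed by a shift of the window by $n - X_i + 1$ positions. Since the window start advances from $1$ to at most $n$, the total shift is at most $n$, and in particular there are at most $n$ evaluations.

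The key step is a coupling of $X_i$ with the birthday variable $X$ of Lemma~\ref{lemma:birthday}. The rightmost position of the $i$-th window is strictly to the right of every previously queried position, and more generally the first $n - X_{i-1} + 1$ positions inspected by the scan have never been examined before, hence are i.u.d.\ uniform in $[n]$ and independent of the history (with the convention $X_0 := 0$). So on the event that the first collision lies within this fresh prefix, $X_i$ is distributed as $X$ restricted to $[2,\,n - X_{i-1} + 1]$, giving $\E(X_i^2 \mathbf{1}_{X_i \le n - X_{i-1} + 1}) \le \E(X^2) \le 4n$.

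I then split on the event $G = \{X_i < n/2 \text{ for every evaluation } i\}$. By the concentration bound $\Pr(X \ge n/2) \le e^{-n/16}$ from the Lemma and a union bound over the at most $n$ evaluations, $\Pr(G^c) \le n e^{-n/16}$. On $G$ every shift exceeds $n/2$, so at most two evaluations occur; moreover $X_i < n/2 < n - X_{i-1} + 1$, so the coupling applies and $\E(X_i^2 \mathbf{1}_G) \le 4n$ for each of $i=1,2$. Thus the expected cost on $G$ is $O(n)$. On $G^c$ I use the trivial worst-case bound $\text{cost} \le n \cdot \binom{n}{2} = O(n^3)$, contributing $\Pr(G^c) \cdot O(n^3) = o(1)$. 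Summing yields $\overline{T} \in O(n)$, and the space bound $\overline{S} \in O(\log n)$ is immediate because the algorithm tracks only a constant number of $O(\log n)$-bit indices (window start, scan position, comparison pointer).

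The main obstacle is controlling $X_i$ when the scan runs past the fresh prefix into previously observed positions, since there the values are fixed and $X_i$ no longer has the birthday distribution. Restricting to the good event $G$ sidesteps this: the non-fresh suffix has length $X_{i-1} - 1 < n/2$ and cannot be reached before a duplicate is found among the fresh values, so the coupling and the second-moment bound apply as stated.
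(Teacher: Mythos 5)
Your proposal follows essentially the same route as the paper's proof: bound the per-evaluation cost via the second-moment bound $\E(X^2)\le 4n$ of Lemma~\ref{lemma:birthday}, observe that on the good event $\{X_i<n/2\ \forall i\}$ the window shifts exceed $n/2$ so at most two evaluations occur, and charge the bad event against the trivial $O(n^3)$ worst case using the tail bound $\Pr(X\ge n/2)\le e^{-n/16}$. The one place you are noticeably more careful than the paper is the conditioning in the second evaluation: the paper simply asserts that knowing part of the window to be duplicate-free ``does not increase'' $W(X_2)$, whereas you make this rigorous by observing that on the good event the scan never leaves the fresh prefix of length $n-X_{i-1}+1$, so $X_i\mathbf{1}_G$ couples exactly to a truncation of the birthday variable and $\E(X_i^2\mathbf{1}_G)\le\E(X^2)$; this tightening is a genuine (if small) improvement in rigor over the paper's argument, though the overall strategy is the same.
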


\begin{proof}
Let $U$ be a sequence of values sampled uniformly from $[n]$ with $n\geq 4$.
Let $M$ be the index of the first duplicate in $U$ found when scanning from
the right and let $X=n-M$.
Let $W(X)$ be the number of comparisons required to find $X$.
Using our naive duplicate finding method we have that $W(X) \leq X(X+1)/2$.
It also follows from inequality~\eqref{birthday:2} that $\mathbb{E}(W) \leq 4n$.

Let $R(n)$ be the total running time of our algorithm and note that
$R(n) \leq n^3/2$.
Furthermore the residual running time at any intermediate stage of the
algorithm is at most $R(n)$.

Let us consider the first window and let $M_1$ be the index of the first
duplicate from the right and let $X_1 = n - M_1$.
If $X_1 \geq n/2$, denote the residual running time by $R^{(1)}$.
We know from~\eqref{birthday:0} that
$\Pr(X_1 \geq n/2) \leq e^{-\frac{n}{16}}$.
If $X_1 < n/2$, shift the window to the right by $M_1+1$ and find $X_2$ for
this new window.
If $X_2 \geq n/2$, denote the residual running time by $R^{(2)}$.
We know that $\Pr(X_2 \geq n/2) \leq e^{-\frac{n}{16}}$.
If $X_1 < n/2 $ and $X_2 < n/2$ then the algorithm will terminate,
outputting `not all distinct' for every window.

The expected running time is then
\begin{equation*}\begin{split}
\mathbb{E}(R(n)) &= E\left(W(X_1)\right) + E\left(R^{(1)}\right)\Pr\left(X_1 \geq \frac{n}{2}\right)\\
&\quad+\Pr\left(X_1 < \frac{n}{2}\right)\Bigl[E\left(W(X_2) \middle\vert X_1 < \frac{n}{2}\right) + E\left(R^{(2)}\right) \Pr\left(X_2 \geq \frac{n}{2} \middle\vert X_1 < \frac{n}{2}\right)\Bigr]\\
&\leq 4n + \frac{n^3}{2} e^{-\frac{n}{16}} + 4n + \frac{n^3}{2} e^{-\frac{n}{16}} \in O(n)
\end{split}\end{equation*}

The inequality follows from the followings three observations.
We know trivially that $\Pr(X_1 < n/2) \leq 1$.
Second, the number of comparisons $W(X_2)$ does not increase if some of
the elements in a window are known to be unique. Third,
$\Pr(X_2 \geq n/2 \land X_1 < n/2) \leq \Pr(X_2 \geq n/2) \leq e^{-\frac{n}{16}}$.
\end{proof}

We note that similar results can be shown for inputs uniformly chosen from the
alphabet $[cn]$ for any constant $c$.

\end{document}